\let\springervec\vec
\numberwithin{equation}{section}
\let\vec\springervec
\def\@citecolor{blue}%
\def\@urlcolor{blue}%
\def\@linkcolor{blue}%
\def\orcidID#1{\href{http://orcid.org/#1}{\protect\raisebox{-1.25pt}{\protect\includegraphics{
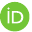%
}}}}
\DeclareFontFamily{U}{mathx}{\hyphenchar\font45}
\DeclareFontShape{U}{mathx}{m}{n}{<-> mathx10}{}
\DeclareSymbolFont{mathx}{U}{mathx}{m}{n}
\DeclareMathAccent{\widebar}{0}{mathx}{"73}
\tikzset{%
  >={Latex[width=2mm,length=2mm]},
          process/.style = {rectangle, draw=black, minimum height=7mm,
                            node distance=2.2cm, text centered,
                            fill=blue!20},
          input/.style={},
          output/.style={text=red},
          highlight/.style={fill=yellow!50},
}
\newcommand{\KK}[2][]{\todo[color=purple!20,#1]{KK: #2}}
\newcommand*\dom{\operatorname{dom}\nolimits}
\newcommand*\ksmt{\texttt{ksmt}\xspace}
\newcommand*\SAT{\textsf{sat}\xspace}
\newcommand*\UNSAT{\textsf{unsat}\xspace}
\newcommand{\NLin}{\mathcal N}
\newcommand{\Lin}{\mathcal L}
\newcommand*\Approx{{\operatorname{approx}}}
\newcommand*\round[1]{\lfloor{#1}\rceil}
\newcommand*\todofb[2][]{\todo[color=cyan!30,tickmarkheight=.2em,#1]{FB: #2}}
\newcommand*\True{\textsf{true}\xspace}
\newcommand*\False{\textsf{false}\xspace}
\newcommand*\nil{\textsf{nil}\xspace}
\newcommand*\interp[1]{\llbracket#1\rrbracket}
\newcommand*\eps{\ensuremath{\epsilon}}
\newcommand*\epsfull[1][\eps]{#1-full\xspace}
\newcommand*\todorev[2][]{\todo[color=black!15]{rev#1: #2}}
\newcommand\Fsat[1][]{\ensuremath{(F^\mathit{sat}_{#1})}}
\newcommand\Funsat{\ensuremath{(F^\mathit{unsat})}}
\newcommand*\reply[3]{{\scriptsize%
    \setlength{\fboxsep}{.25ex}\noindent\colorbox{#2}%
    {\emph{#1:#3}}}}
\newcommand*\fb[1]{\reply{FB}{cyan!30}{#1}}
\newcommand*\kk[1]{\reply{KK}{purple!20}{#1}}
\newcommand*\mk[1]{\reply{MK}{green!20}{#1}}
\newcommand\cade[2]{#2}
\newcommand*\None{\textsf{None}}
\title{
The \ksmt calculus is a $\delta$-complete decision procedure for non-linear constraints\thanks{%
    This research was partially supported by an Intel research grant, the DFG grant WERA MU 1801/5-1 and the RFBR-JSPS 20-51-5000 grant.
}
}
\author{%
    Franz~Brauße\inst{2}\orcidID{0000-0002-2386-7489}
    \and Konstantin~Korovin \inst{2}\orcidID{0000-0002-0740-621X}
    \and Margarita~V.~Korovina\inst{3}\orcidID{0000-0002-2707-0231}
    \and Norbert~Th.~Müller\inst{1}\orcidID{0000-0003-3684-3029}
}
\institute{%
Abteilung Informatikwissenschaften, Universität Trier, Germany \and
The University of Manchester, UK \and
A.P. Ershov Institute of Informatics Systems, Novosibirsk, Russia}
\date{}
\newcommand\irrelcade[1]{}
\begin{document}

\maketitle

\begin{abstract}
    \ksmt is a CDCL-style calculus for solving non-linear constraints over real numbers involving polynomials and transcendental functions. In this paper we investigate properties of the \ksmt calculus and show that it is a $\delta$-complete decision procedure for bounded problems.
    We also propose an extension with local linearisations, 
    which allow for more efficient treatment of non-linear constraints.
\end{abstract}

\section{Introduction}\label{sec:intro}
\strut\KK{Spell check final version!}%
\todo{All references to appendices need to be replaced. a) Upcoming journal version, b) TR on arXiv, c) just drop \fb b \kk b \mk b
Remember to eliminate all words ``appendix''.}%
Solving
non-linear constraints is important in many applications, including verification of
cyber-physical systems
,
software verification
,
proof assistants for mathematics
~\cite{DBLP:books/sp/Platzer18,DBLP:conf/formats/KuratkoR14,https://doi.org/10.1002/rnc.2914,DBLP:conf/fm/BardBD19,DBLP:journals/corr/HalesABDHHKMMNNNOPRSTTTUVZ15,DBLP:conf/fmcad/BrausseKK20}.
Hence there has been a number of approaches for solving non-linear constraints, involving symbolic methods~\cite{DBLP:journals/cca/JovanovicM12,DBLP:conf/cade/MouraP13,DBLP:journals/fmsd/TiwariL16,DBLP:conf/casc/KorovinKS14} as well as numerically inspired ones, in particular for dealing with transcendental functions~\cite{DBLP:conf/cade/GaoAC12,DBLP:conf/cade/TungKO16}, and
combinations of symbolic and numeric methods~\cite{DBLP:conf/frocos/BrausseKKM19,DBLP:journals/tocl/CimattiGIRS18,DBLP:conf/frocos/FontaineOSV17}.

 In \cite{DBLP:conf/frocos/BrausseKKM19} we introduced the \ksmt calculus 
for solving non-linear constraints over a large class of functions including polynomial, exponential and trigonometric functions.
The \ksmt calculus%
\footnote{Implementation is available at \url{http://informatik.uni-trier.de/~brausse/ksmt/}} combines CDCL-style 
reasoning~\cite{DBLP:conf/iccad/SilvaS96,DBLP:conf/vmcai/MouraJ13,DBLP:journals/jar/BonacinaGS20} over reals based on conflict resolution~\cite{CR:KTV09} with incremental linearisations of non-linear 
functions using methods from computable analysis~\cite{DBLP:series/txtcs/Weihrauch00,DBLP:conf/cca/Muller00}. 
Our approach is based on computable analysis and exact real arithmetic which avoids limitations of double precision computations caused by rounding errors and instabilities in numerical methods. In particular, satisfiable and unsatisfiable results returned by \ksmt are exact as required in many applications.
This approach also supports implicit representations of functions as solutions of ODEs and PDEs~\cite{DBLP:books/daglib/0087495}. 

It is 
well known that in the presence of transcendental functions the constraint satisfiability problem is undecidable~\cite{DBLP:journals/jsyml/Richardson68}.
However if we
only require solutions up to some specified precision $\delta$, then the problem can be solved algorithmically on bounded instances and that is the motivation behind $\delta$-completeness, which was introduced in~\cite{DBLP:conf/cade/GaoAC12}.
In essence a $\delta$-complete procedure decides if a formula is
unsatisfiable or a $\delta$ weakening of the formula is satisfiable.

In this paper we investigate theoretical properties of the \ksmt calculus, and its extension $\delta$-\ksmt for the $\delta$-SMT setting.
Our main results are as follows:
\begin{enumerate}
    \item  We introduced a notion of \emph{\epsfull{} linearisations} and prove that all \epsfull{} runs of \ksmt are terminating on bounded instances. 
\item  We extended the $\ksmt$ calculus to the $\delta$-satisfiability setting and proved that $\delta$-\ksmt is a
\emph{$\delta$-complete decision procedure} for bounded instances.
\item We introduced an algorithm for computing \epsfull{} \emph{local linearisations} and integrated it into $\delta$-\ksmt.  Local linearisations can be used to considerably narrow the search space
by taking into account local behaviour of non-linear functions avoiding computationally expensive global analysis.
\end{enumerate}

In \Cref{sec:ksmt-overview}, we give an overview about the \ksmt calculus and
introduce the notion of \epsfull linearisation used throughout the rest of the
paper. We also present a completeness theorem. \Cref{sec:delta-sat} introduces
the notion of $\delta$-completeness and related concepts. In
\Cref{sec:delta-ksmt-cade}
we introduce the $\delta$-\ksmt adaptation, prove it is correct and
$\delta$-complete, and give concrete effective linearisations
based on a uniform modulus of continuity.
Finally in \Cref{sec:eps-from-M-cade}, we introduce local linearisations and show that
termination
is independent of computing uniform moduli of continuity,
before we conclude in \Cref{sec:concl}.

\section{Preliminaries}\label{sec:prelim}
\irrelcade{
\todo[inline]{NM, KK: Reframe notions to intervals which might be easier to digest for target audience. \fb n}
}
The following conventions are used throughout this paper.
By $\Vert\cdot\Vert$ we denote the maximum-norm $\Vert (x_1,x_2,\ldots,x_n)\Vert=\max\{|x_i|:1\leq i\leq n\}$.
When it helps clarity, we write finite and infinite sequences $\vec x=(x_1,\ldots,x_n)$ and $\vec y=(y_i)_i$ in bold typeface.
We are going to use open balls
$B(\vec c,\epsilon)=\{\vec x:\Vert\vec x-\vec c\Vert<\epsilon\}\subseteq\mathbb R^n$
for $\vec c\in\mathbb R^n$ and $\eps>0$ and $\widebar A$ to denote the closure of
the set $A\subseteq\mathbb R^n$ in the
standard topology induced by the norm.
By $\mathbb Q_{>0}$ we denote the set $\{q\in\mathbb Q:q>0\}$.
For sets $X,Y$, a (possibly partial) function from $X$ to $Y$ is written as $X\to Y$.
We use
the notion of compactness: a set $A$ is compact iff every open cover of $A$ has a finite subcover.
\todofb{Note: We use this for $\widebar{D_P}$ in the proof of \Cref{th:int:lin:alt-cade}.}%
In Euclidean spaces this is equivalent to $A$ being bounded and closed~\cite{Willard70}.

\subsection*{Basic notions of Computable Analysis}
\label{sec:ca}

Let us recall the notion of computability of
functions over real numbers used throughout this paper.
A rational number $q$ is an {\it $n$-approximation} of a real number $x$ if $\Vert q-x\Vert\leq2^{-n}$. 
Informally, a function $f$ is {\it computed} by a function-oracle Turing machine
$M_f^?$, where $^?$ is a placeholder for the oracle representing the argument of the function, in the following way.
The real argument $x$ is represented by an oracle function $\varphi:\mathbb N\to\mathbb Q$, for each $n$ returning   
an $n$-approximation $\varphi_n$ of $x$.
For simplicity, we refer to $\varphi$ by the sequence $(\varphi_n)_n$.
When run with argument $p\in\mathbb N$, $M_f^\varphi(p)$ computes a rational
$p$-approximation
of $f(x)$ by querying its oracle $\varphi$ for approximations of $x$.
Let us note that the definition of the oracle machine does not depend on the concrete oracle, i.e., the oracle can be seen as a parameter. In case only the machine without a concrete oracle is of interest, we write $M_f^?$.
We refer to~\cite{DBLP:books/daglib/0067010} for a precise definition of the model of computation by function-oracle Turing machines which is standard in computable analysis.

\begin{definition}[\cite{DBLP:books/daglib/0067010}]\label{def:computable}
Consider $\vec x \in \mathbb R^n$. A \emph{name} for $\vec x$ is a rational sequence
$\vec\varphi=(\vec\varphi_k)_k$
such that $\forall k:\Vert\vec\varphi_k-\vec x\Vert\leq2^{-k}$.
 A function $f:\mathbb R^n\to\mathbb R$ is \emph{computable} iff there is a
    function-oracle Turing machine 
    $M_f^?$
    such that for all $\vec x\in\dom f$ and names  $\vec\varphi$ for $\vec x$,
    $|M_f^{\vec\varphi}(p)-f(\vec x)|\leq2^{-p}$ holds for all $p\in\mathbb N$.
\end{definition}
This definition is closely related to interval arithmetic with unrestricted precision, but enhanced with the guarantee of convergence and it is equivalent to the notion of computability used in~\cite{DBLP:series/txtcs/Weihrauch00}. 
The class of computable functions contains polynomials and transcendental functions like $\sin$, $\cos$, $\exp$, among others.
It is well known \cite{DBLP:books/daglib/0067010,DBLP:series/txtcs/Weihrauch00}
that this class is closed under composition and
that computable functions are continuous.
By continuity, a computable function $f:\mathbb R^n\to\mathbb R$ total on a compact $D\subset\mathbb R^n$
has a computable \emph{uniform modulus of continuity} $\mu_f:\mathbb N\to\mathbb N$ on $D$~\cite[Theorem 6.2.7]{DBLP:series/txtcs/Weihrauch00},
that is,
\begin{align}
    \forall k\in\mathbb N\,\forall\vec y,\vec z\in D:
    \Vert\vec y-\vec z\Vert\leq2^{-\mu(k)}\implies|f(\vec y)-f(\vec z)|\leq2^{-k}
    \text.
    \label{eq:mu}
\end{align}
A uniform modulus of continuity of $f$
expresses how changes in the value of $f$ depend on changes of the arguments
in a uniform way. 

\section{The \ksmt calculus}\label{sec:ksmt-overview}
We first describe the \ksmt calculus
for solving non-linear constraints~\cite{DBLP:conf/frocos/BrausseKKM19} informally, and subsequently recall the main definitions which we use in this paper.
The \ksmt calculus consists of transition rules, which, for any formula in linear separated form, allow deriving lemmas consistent with the formula and, in case of termination,
produce a satisfying assignment for the formula or show that it is unsatisfiable.
A quantifier-free formula is in separated linear form $\mathcal L\cup\mathcal N$ if $\mathcal L$ is a set of clauses over linear constraints and $\mathcal N$ is a set of 
non-linear atomic constraints; this notion is rigorously defined below.

In the \ksmt calculus there are four transition rules applied to its states: Assignment refinement $(A)$, Conflict resolution $(R)$, Backjumping $(B)$ and Linearisation $(L)$.
The final \ksmt states are \SAT and \UNSAT.
A non-final \ksmt state is
a triple $(\alpha,\mathcal L,\mathcal N)$ where $\alpha$ is a (partial) assignment of variables to rationals.
A \ksmt derivation
starts with an initial state where $\alpha$ is
 empty and  tries to extend this assignment to a solution of
 $\mathcal  L \cup \mathcal N$ by
repeatedly
 applying the Assignment refinement rule.
 When such assignment extension is not possible
 we either obtain a linear conflict which is resolved using the  conflict resolution rule,  or a non-linear conflict which is resolved using the linearisation rule.
 
The main idea behind the linearisation rule is to approximate the non-linear constraints around the conflict using linear constraints in such a way that the conflict will be shifted into the linear part where it will be resolved using conflict resolution.
Applying either of these two rules results in a state containing a clause evaluating to \False under the current assignment. They either result in application
of the backjumping rule, which undoes assignments or in termination in case the formula is \UNSAT.
 In this procedure, only the assignment and linear part of the state change and the non-linear part stays fixed. 

\begin{figure}[tp]
\centering\begin{tikzpicture}[node distance=6em,every node/.style={scale=0.74},align=center]
\matrix[row sep=1em,column sep=2em] {
& \node (input) [input] {\clap{separated linear form}};
\\
\\
& \node (lin-check) [draw,thick,star,star points=9,star point ratio=0.87,inner sep=.5ex,fill=yellow!40] {lin. \\ check};
\\ \node (choice) [draw,thick,diamond,inner sep=.5ex,fill=green!30] {choice};
&
& \node (A) [process,thick] {A};
&
& \node (B) [process,very thick,draw=red] {B};
\\
&
& \node (ex-z) [draw,red,text=black,very thick,star,star points=9,star point ratio=0.87,inner sep=.5ex,fill=yellow!40] {$\exists z$};
& \node (R) [process,thick] {R}; \\
& \node (nlin-check) [draw,star,thick,star points=9,star point ratio=0.87,inner sep=.5ex,fill=yellow!40] {nlin. \\ check};
&
& \node (L) [process,thick] {L};
\\
};
\draw[->] (input) -- node [right] {$\alpha=\nil$} (lin-check);
\draw[->] (lin-check) edge [in=90,out=180] node [above left] {p.lin.cons.} (choice);
\draw[->] (lin-check) edge [in=90,out=0] node [above right] {p.lin.incons.} (B);
\draw[->] (choice) edge [out=-45,in=180] (ex-z);
\draw[->] (choice) edge [out=270,in=180] (nlin-check);
\draw[->] (A) edge [bend right=10] (choice.east);
\draw[->] (B) edge [bend right=20] (choice);
\draw[->] (ex-z) -- node [left,midway,yshift=-1ex] {$\exists q$} (A);
\draw[->] (ex-z) -- node [above,near start] {$\neg\exists q$} (R);
\draw[->] (R) edge [bend right=20] (B);
\draw[->] (nlin-check) edge [bend right=15] node [right,near start,yshift=-.5ex] {p.nlin.cons.} (ex-z);
\draw[->] (nlin-check) edge [bend right=15] node [below] {p.nlin.incons.} (L);
\draw[->] (L) edge [bend right] (B);
\end{tikzpicture}
\caption{Core of \ksmt calculus. Derivations terminate in 
red nodes.}
\label{fig:white-box}
\end{figure}
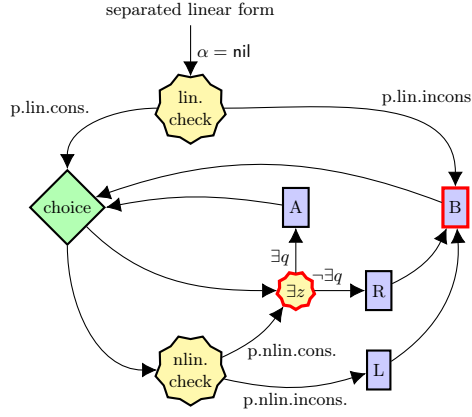

\paragraph{Notations.}
Let
$\mathcal F_{\mathrm{lin}}$ consist of rational constants, addition and multiplication
by rational constants; $\mathcal F_{\mathrm{nl}}$ denotes an arbitrary collection of non-linear
computable
functions including transcendental functions and polynomials over the reals.
We consider the structure
$(\mathbb R,\langle\mathcal F_{\mathrm{lin}}\cup\mathcal F_{\mathrm{nl}},\mathcal P\rangle)$
where $\mathcal P=\{{<},{\leq},{>},{\geq},{=},{\neq}\}$ and a set of variables $V=\{x_1,x_2,\ldots,x_n,\ldots\}$.
We will use, possibly with indices, $x$ to denote variables and
$q,c,e$ for rational constants.
Define terms, predicates and formulas over $V$ in the standard way.
An \emph{atomic linear constraint} is a formula of the form: 
$q+c_1x_1+\ldots+c_nx_n \diamond 0$ where $q,c_1,\ldots,c_n\in\mathbb Q$ and $\diamond\in\mathcal P$. Negations of atomic formulas can be eliminated by rewriting the predicate symbol $\diamond$ in the standard way, hence we assume that all literals are positive.
A \emph{linear constraint} is a disjunction of atomic linear constraints, also called \emph{(linear) clause}.
An \emph{atomic non-linear} constraint is a formula of the form 
$f(\vec x)\diamond 0$, where $\diamond\in\mathcal P$
and $f$ is a composition of computable non-linear functions from $\mathcal F_{\mathrm{nl}}$ over variables $\vec x$.
Throughout this paper
for every computable real function $f$
we use $M_f^?$ to denote a function-oracle Turing machine computing $f$.
We assume quantifier-free formulas in \emph{separated linear
form}~\cite[Definition~1]{DBLP:conf/frocos/BrausseKKM19},
that is, $\mathcal L\cup\mathcal N$ where $\mathcal L$ is a set of linear constraints and $\mathcal N$ is a set of non-linear atomic constraints. 
Arbitrary quantifier-free formulas can be transformed equi-satisfiably into separated linear form in polynomial time~\cite[Lemma~1]{DBLP:conf/frocos/BrausseKKM19}.
Since in separated linear form all non-linear constraints are atomic we will call them just \emph{non-linear constraints}.

Let $\alpha:V\to\mathbb Q$ be a partial variable assignment.
The interpretation $\interp{\vec x}^\alpha$ of a vector of variables $\vec x$ under $\alpha$ is defined in a standard way as component-wise application of $\alpha$.
Define the notation $\interp{t}^\alpha$ as
evaluation of term $t$ under assignment $\alpha$, that can be partial,
in which case
$\interp{t}^\alpha$ is treated symbolically.
We extend $\interp\cdot^\alpha$ to predicates, clauses and CNF in the
usual way and $\True,\False$ denote the constants of the Boolean domain.
The evaluation $\interp{t\diamond 0}^\alpha$ for a predicate $\diamond$ and a term $t$ results in $\True$ or $\False$ only if all variables in $t$ are assigned by $\alpha$.

In order to formally restate the calculus, the notions of linear resolvent and linearisation are essential.
A resolvent $R_{\alpha,\mathcal L,z}$ on a variable $z$ is a set of linear constraints 
that do not contain $z$, are implied by the formula $\mathcal L$ and which evaluate to $\False$ under the current partial assignment $\alpha$; for more
details see~\cite{CR:KTV09,DBLP:conf/frocos/BrausseKKM19}.

\begin{definition}\label{def:linearisation}
Let $P$ be a non-linear constraint and let $\alpha$ be an 
assignment with $\interp P^\alpha=\False$.
A \emph{linearisation of $P$ at $\alpha$} is a linear clause $C$
with the properties:
\begin{enumerate}
\item\label{def:linearisation:it1}
    $\forall\beta:\interp P^\beta=\True\implies\interp C^\beta=\True$,
    and
\item\label{def:linearisation:it2}
    $\interp C^\alpha=\False$.
\end{enumerate}
\end{definition}
Wlog.\ we can assume that the variables of $C$ are a subset of the variables of $P$.
Let us note that any linear clause $C$ represents the complement of a rational polytope $R$ and we will use both
interchangeably. Thus for a rational polytope $R$,  $\vec{x}\not \in R$ also stands for a linear clause. 
In particular, any linearisation excludes a rational polytope containing the conflicting assignment from the search space. 

\paragraph{Transition rules.}
For a formula $\mathcal L_0\cup\mathcal N$ in separated linear form,
the initial \ksmt state is $(\nil,\mathcal L_0,\mathcal N)$. The calculus consists of the following transition rules from a state $S=(\alpha,\mathcal L,\mathcal N)$ to $S'$:
\begin{description}
\item[$(A)$]\label{rule:A}
    \emph{Assignment.}
    $S'=(\alpha::z\mapsto q,\mathcal L,\mathcal N)$ iff $\interp{\mathcal L}^\alpha\neq\False$ and there is a variable $z$ unassigned in $\alpha$ and $q\in\mathbb Q$ with $\interp{\mathcal L}^{\alpha::z\mapsto q}\neq\False$.
\item[$(R)$]\label{rule:R'}
    \emph{Resolution.}
    $S'=(\alpha,\mathcal L\cup R_{\alpha,\mathcal L,z},\mathcal N)$ iff
    $\interp{\mathcal L}^\alpha\neq\False$ and there is a variable $z$ unassigned in $\alpha$ with $\forall q\in\mathbb Q:\interp{\mathcal L}^{\alpha::z\mapsto q}=\False$
    and $R_{\alpha,\mathcal L,z}$ is a resolvent.
\item[$(B)$]\label{rule:B}
    \emph{Backjump.}
    $S'=(\gamma,\mathcal L,\mathcal N)$ iff $\interp{\mathcal L}^\alpha=\False$ and there is a maximal prefix $\gamma$ of $\alpha$ such that $\interp{\mathcal L}^\gamma\neq\False$.
\item[$(L)$]\label{rule:L}
    \emph{Linearisation.}
    $S'=(\alpha,\mathcal L\cup \{L_{\alpha, P}\},\mathcal N)$ iff
    $\interp{\mathcal L}^\alpha\neq\False$,
    there is $P$ in $\mathcal N$ with $\interp P^\alpha=\False$ and
    there is a linearisation
    $L_{\alpha,P}$ of $P$ at $\alpha$.
\item[\Fsat]
    \emph{Final \SAT.}
    \todo{prefer $\SAT_\alpha$ \kk y \mk y}%
    $S'=\SAT$ if all variables are assigned in $\alpha$,
    $\interp{\mathcal L}^\alpha=\True$ and
    none of the rules $(A),(R),(B),(L)$ is applicable.
\item[\Funsat]
    \emph{Final \UNSAT.}
    $S'=\UNSAT$ if $\interp{\mathcal L}^\nil=\False$. In other words a trivial contradiction, e.g., $0>1$ is in $\mathcal L$.
\end{description}

A path (or a run) is a derivation in a \ksmt.
A procedure is an effective (possibly non-deterministic) way to construct a path.

\paragraph{Termination.}\label{par:termination}
If no transition rule is applicable, the derivation terminates.
For clarity, we added the explicit rules $\Fsat$ and $\Funsat$ which lead to the final states. 
This calculus is sound \cite[Lemma~2]{DBLP:conf/frocos/BrausseKKM19}: if the final transition is $\Fsat$, then $\alpha$ is a solution to the original formula, or $\Funsat$, then a trivial contradiction $0>1$ was derived and  the original formula is unsatisfiable. The calculus also makes progress by reducing the search space~\cite[Lemma~3]{DBLP:conf/frocos/BrausseKKM19}.

\begin{figure}
\centering
\small
\begin{minipage}{.47\linewidth}
\[ \mathcal C=
   \underbrace{\textcolor{red}{(y\leq 1/x)}}_P{}
   \begin{aligned}[t]
	&{}\land{}\textcolor{green!40!black}{(x/4+1\leq y)} \\
	&{}\land\textcolor{green!40!black}{(y\leq 4\cdot(x-1))} \\
	&{{}\land\big((x\leq\tfrac{12}{19})\lor(y\leq\tfrac{19}{12})\big)} \\
	&{{}\land\textcolor{blue}{\big((x\leq\tfrac{220}{223})\lor(y\leq\tfrac{223}{220})\big)}} \\
	&{{}\land\textcolor{purple}{(\tfrac43\leq x)\land(x\leq\tfrac{220}{223})}} \\
	&{{}\land\textcolor{cyan!50!black}{(\tfrac43\leq\tfrac{220}{223})}}
   \end{aligned} \]

\centering
\includegraphics[width=.8\linewidth]{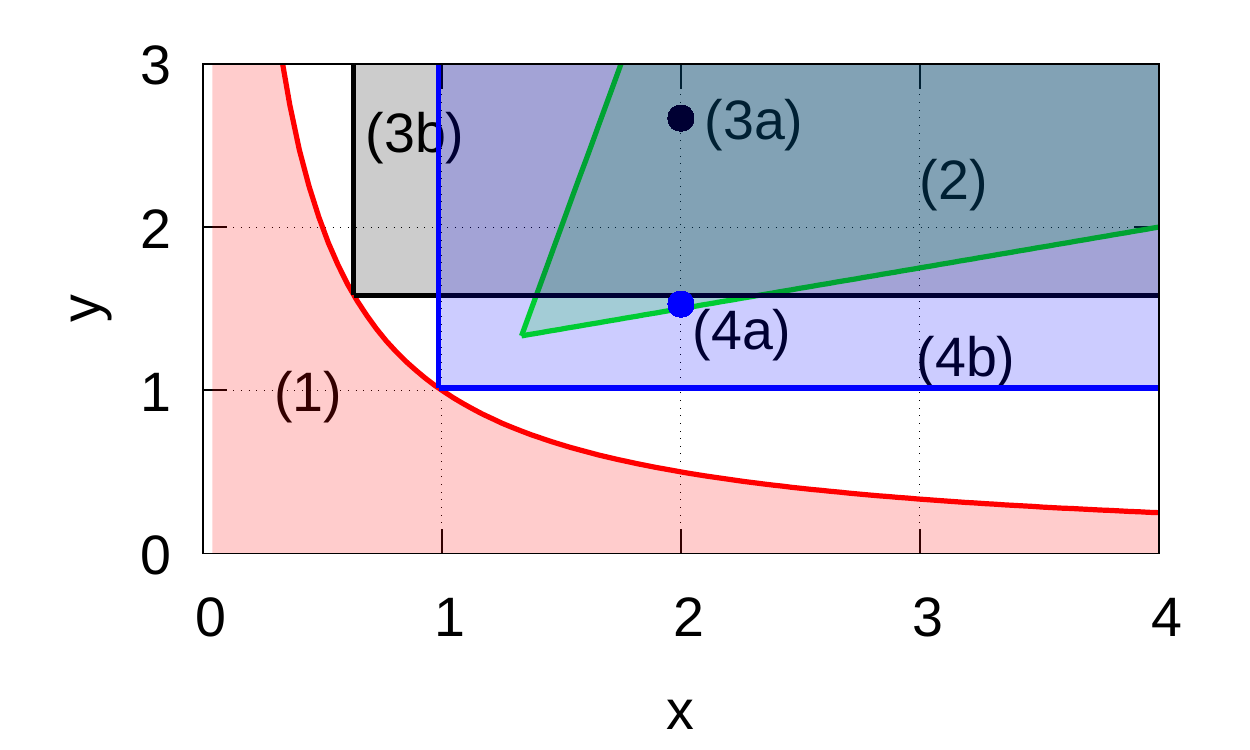}
\end{minipage}%
\begin{minipage}{.53\linewidth}\small
Linearisation of $P$ on conflicts $(x,y)$ at $\alpha$ here:
\begin{itemize}
\vspace{-.5\baselineskip}
\item choose $d\coloneqq(1/\interp x^\alpha+\interp y^\alpha)/2$,
\item 
	$C=\big(x\leq 1/d\;\lor\; y\leq d\big)$
\end{itemize}
\centering
\begin{tabular}{c|l|c}\removelastskip
	rule & $\alpha$ & note \\ \hline
	$(A)$ & $x\mapsto 2$ & \\
	$(A)$ & $x\mapsto 2$, $y\mapsto \tfrac83$ & (3a) \\
	$(L)$ & $x\mapsto 2$, $y\mapsto \tfrac83$ & (3b) \\
	$(B)$ & $x\mapsto 2$ & \\
	$(A)$ & $x\mapsto 2$, $y\mapsto\tfrac{84}{55}$ & (4a) \\
	$(L)$ & $x\mapsto 2$, $y\mapsto\tfrac{84}{55}$ & (4b) \\
	$(B)$ & $x\mapsto 2$ & \\
	$(R)$ & $x\mapsto 2$ & \textcolor{green!40!black}{on} \textcolor{blue}{$y$} \\
	$(B)$ & & \\
	$(R)$ & & \textcolor{purple}{on $x$} \\
	\Funsat && \texttt{unsat}
\end{tabular}
\end{minipage}
\caption{\texttt{unsat} example run of \ksmt using interval linearisation~\cite{DBLP:conf/frocos/BrausseKKM19}.}
\label{fig:lin-exampe-unsat-run}
\end{figure}

An example run of the \ksmt calculus is presented in \Cref{fig:lin-exampe-unsat-run}. We start in a state with a non-linear part $\NLin=\{y\leq 1/x\}$, which defines the pink area and the linear part $\Lin=\{ (x/4+1\leq y), (y\leq4\cdot(x-1))\}$,  shaded in green. Then we successively apply \ksmt rules excluding regions around candidate solutions by
linearisations,
until we derive linearisations 
which separates the pink area from the green area thus deriving a contradiction.

\begin{remark}\label{rem:lin-infty-often}
In general a derivation may not terminate.
The only cause of non-termination is 
the linearisation rule which adds new linear constraints and can
be applied infinitely many times.
To see this, observe that \ksmt with only
the rules $(A),(R),(B)$ corresponds to the conflict resolution
calculus which is known to be terminating~\cite{CR:KTV09,DBLP:conf/cade/KorovinV11}. Thus, in infinite \ksmt runs the linearisation rule $(L)$ is applied infinitely often. This argument is used in the proof of \Cref{th:bounded} below.
Let us note that during a run the \ksmt calculus neither conflicts nor lemmas can be generated more than once. In fact, any generated linearisation is not implied by the linear part, 
prior to adding this linearisation.
\end{remark}

\subsection{Sufficient termination conditions}
In this section we will assume that $(\alpha,\mathcal L,\mathcal N)$ is a \ksmt state obtained by applying \ksmt inference rules to an initial state.  
As in \cite{DBLP:conf/cade/GaoAC12} we only consider bounded instances. In many applications this is a natural assumption as variables usually range within some (possibly large) bounds.  
\irrelcade{\todofb{remove sentence, implied by `bounded set'}}%
We can assume that these bounds are  made explicit as linear constraints in the system.

\begin{definition}
Let $F$ be the formula $\mathcal L_0\land\mathcal N$ in separated linear form over variables $x_1,\ldots,x_n$ and
let $B_i$ be the set defined by the conjunction of all clauses in $\mathcal L_0$ univariate in $x_i$, 
for $i=1,\ldots,n$; in particular, if there are no univariate linear constraints over $x_i$ then $B_i=\mathbb R$.
We call $F$ a \emph{bounded instance} if:
\begin{itemize}
\item
$D_F\coloneqq\bigtimes_{i=1}^n B_i$ is bounded,
and
\item for each non-linear constraint $P:f(x_{i_1},\ldots,x_{i_k})\diamond 0$ in $\mathcal N$
with $i_j\in\{1,\ldots,n\}$ for $j\in\{1,\ldots,k\}$
it holds that
$\widebar{D_P}\subseteq\dom f$ where
$D_P\coloneqq\bigtimes_{j=1}^k B_{i_j}$.
\end{itemize}
\end{definition}
By this definition, already the linear part of bounded instances
explicitly defines a bounded set by univariate constraints.
Consequently, the set of solutions of $F$ is bounded as well.

In \Cref{th:bounded} we show that when we consider bounded instances and restrict linearisations to so-called $\epsilon$-full linearisations, then the procedure terminates. We use this to show that the \ksmt-based decision procedure we introduce in \Cref{sec:delta-ksmt-cade} is $\delta$-complete.
\begin{definition}\label{def:eps-full}
Let $\eps>0$, $P$ be a non-linear constraint over variables $\vec x$ and let
$\alpha$ be an assignment of $\vec x$.
A linearisation $C$ of $P$ at $\alpha$ is called \emph{\epsfull} iff
for all assignments $\beta$ of $\vec x$ with
$\interp{\vec x}^\beta\in B(\interp{\vec x}^\alpha,\eps)$, $\interp{C}^\beta=\False$.

A \ksmt 
\todofb{`almost all' weakens \Cref{th:int:lin-epsfull-cade,th:int:lin:alt-cade}.}%
run is called \epsfull for some $\eps>0$, if
all but finitely many linearisations in this run are $\epsilon$-full.
\end{definition}

The next theorem provides a basis for termination of \ksmt-based decision procedures for satisfiability.
\begin{theorem}\label{th:bounded}
Let $\eps>0$.
On bounded instances,
\epsfull \ksmt runs
are terminating.
\end{theorem}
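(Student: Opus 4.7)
The plan is to argue by contradiction: assume some $\epsilon$-full \ksmt run is infinite. By \Cref{rem:lin-infty-often}, the fragment $(A),(R),(B)$ alone corresponds to conflict resolution and hence terminates, so any infinite run must apply the linearisation rule $(L)$ infinitely often. By the definition of an $\epsilon$-full run all but finitely many of these applications use an $\epsilon$-full linearisation, so infinitely many $\epsilon$-full linearisations are produced. Since $\mathcal N$ is a finite set, a pigeonhole argument gives a single non-linear constraint $P\in\mathcal N$, with variables $\vec x_P=(x_{i_1},\ldots,x_{i_k})$, that is $\epsilon$-fully linearised infinitely often.

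Let $(\alpha_i)_{i\in\mathbb N}$ enumerate, in order of occurrence, the states at which $(L)$ adds an $\epsilon$-full linearisation $C_i$ of this fixed $P$, and put $\vec p_i\coloneqq\interp{\vec x_P}^{\alpha_i}$. Applicability of $(L)$ requires $\interp{P}^{\alpha_i}=\False$, so all variables of $P$ are assigned by $\alpha_i$ and $\vec p_i\in\mathbb Q^k$ is well-defined. The univariate constraints in $\mathcal L_0$ persist in $\mathcal L$ throughout the derivation, and they must evaluate to $\True$ under $\alpha_i$, for otherwise $\interp{\mathcal L}^{\alpha_i}=\False$ would contradict the applicability of $(L)$; therefore $\vec p_i\in D_P\subseteq\widebar{D_P}$, which is compact by the bounded-instance hypothesis (closed and bounded subset of $\mathbb R^k$).

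Next I show the $\vec p_i$ are pairwise $\epsilon$-separated. Fix $i<j$. The clause $C_i$ is added at step $i$ and is never removed, so $C_i\in\mathcal L$ at the step of $\alpha_j$; applicability of $(L)$ at $\alpha_j$ gives $\interp{C_i}^{\alpha_j}\neq\False$. The variables of $C_i$ form a subset of $\vec x_P$, all of which are assigned by $\alpha_j$, hence $\interp{C_i}^{\alpha_j}$ is definite and equals $\True$. The contrapositive of \Cref{def:eps-full} (applied to $\beta=\alpha_j$ restricted to $\vec x_P$) then yields $\vec p_j\notin B(\vec p_i,\epsilon)$, i.e., $\Vert\vec p_i-\vec p_j\Vert\geq\epsilon$.

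Consequently, $\{\vec p_i:i\in\mathbb N\}$ would be an infinite $\epsilon$-separated subset of the compact set $\widebar{D_P}$. This contradicts sequential compactness: any sequence in a compact subset of $\mathbb R^k$ has a convergent, hence Cauchy, subsequence, which is incompatible with $\epsilon$-separation. I expect the main obstacle to be the bookkeeping in the third paragraph, namely carefully justifying that the variables of every previous $C_i$ are really assigned in $\alpha_j$ so that $\interp{C_i}^{\alpha_j}$ is definite rather than symbolic; this is what forces the pigeonhole reduction to a single constraint $P$, and relies on the fact that the variables of a linearisation may, without loss of generality, be taken to be a subset of the variables of the constraint being linearised.
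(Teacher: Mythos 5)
Your proposal is correct and follows essentially the same route as the paper's own proof: reduce to a single constraint $P$ linearised infinitely often, use the precondition of rule $(L)$ together with \epsfull{}ness to show the conflict points are pairwise $\eps$-separated, and contradict boundedness of the domain. The only cosmetic difference is that you invoke sequential compactness of $\widebar{D_P}$ where the paper simply observes that a bounded set admits only finitely many points with pairwise distance at least $\eps$.
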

\begin{proof}
Let $F:\mathcal L_0 \wedge \mathcal N$ be a bounded instance and $\eps>0$.
Towards a contradiction assume
there is an infinite
\epsfull
derivation $(\alpha_0,\mathcal L_0,\mathcal N),\dots, (\alpha_n,\mathcal L_n,\mathcal N), \dots $ in the \ksmt calculus.
Then, by definition of the transition rules, $\mathcal L_k\subseteq\mathcal L_l$ for all $k,l$ with $0\leq k\leq l$.
According to \Cref{rem:lin-infty-often} in any infinite derivation the linearisation rule must be applied infinitely many times.
During any run of \ksmt the set of non-linear constraints $\mathcal N$ is fixed and therefore
there is a non-linear
constraint $P$ in $\mathcal N$ over variables $\vec x$ to which linearisation is applied infinitely often. Let 
$(\alpha_{i_1},\mathcal L_{i_1},\mathcal N),\dots, (\alpha_{i_n},\mathcal L_{i_n},\mathcal N), \dots$ be a corresponding subsequence in the derivation such that
$C_{i_1}\in \mathcal L_{i_1+1},\ldots,C_{i_n}\in \mathcal L_{i_n+1},\ldots$ are $\epsilon$-full linearisations of $P$.
Consider
two different linearisation steps $k,\ell\in\{i_j:j\in\mathbb N\}$
in the derivation
where $k < \ell$.
By the precondition $\interp{\mathcal L_\ell}^{\alpha_\ell}\neq\False$ of rule $(L)$ applied in step $\ell$, in particular the linearisation $C_k\in\mathcal L_{k+1}\subseteq\mathcal L_\ell$ of $P$ constructed in step $k$ does not evaluate to \False under $\alpha_\ell$.
Since the set of variables in $C_k$ is a subset of those in $P$, $\interp{C_k}^{\alpha_\ell}\neq\False$ implies $\interp{C_k}^{\alpha_\ell}=\True$.
By assumption, the linearisation $C_k$ is \epsfull, thus
from Definition~\ref{def:eps-full} it follows that
$\interp{\vec x}^{\alpha_\ell}\notin B(\interp{\vec x}^{\alpha_k},\epsilon)$.
Therefore the distance between $\interp{\vec x}^{\alpha_k}$ and $\interp{\vec x}^{\alpha_\ell}$ is at least $\epsilon$.  
However, every conflict satisfies the variable bounds defining $D_F$, so there could be only finitely many conflicts with pairwise distance at least $\epsilon$. 
This contradicts the above. 
\end{proof}

Concrete algorithms to compute \epsfull linearisations are presented in \Cref{sec:delta-ksmt-cade,sec:eps-from-M-cade}.

\section{$\delta$-decidability}\label{sec:delta-sat}
In the last section, we proved termination of the \ksmt calculus on bounded instances
when linearisations are \epsfull.
Let us now investigate how \epsfull linearisations of
constraints involving non-linear computable functions can be constructed.
To that end, we assume that all non-linear functions are defined on the closure of the bounded space $D_F$ defined by the bounded instance $F$.

So far we described an
approach which gives exact results but at the same time is necessarily incomplete due to undecidability of non-linear constraints in general. 
On the other hand, non-linear constraints usually can be approximated using numerical methods allowing to obtain approximate solutions to the problem.
This gives rise to the 
bounded $\delta$-SMT problem~\cite{DBLP:conf/cade/GaoAC12} which allows an overlap between the properties $\delta$-\SAT and \UNSAT of formulas
as illustrated by \Cref{fig:lin-mv-cade}.
It is precisely this overlap that enables $\delta$-decidability of bounded instances.

\irrelcade{
The notion of computability for functions over the reals is \todofb{cite}%
closely related to effective continuity.
That way, the problem of deciding satisfiability of a formula can be lifted to a continuous domain where functions are no longer handled on a purely algebraic basis only as symbols, but also gain the quality of approximability.
In particular, computation of these functions is performed on continuous spaces of sequences of approximations of real numbers.
}

Let us recall the notion of $\delta$-decidability, adapted from~\cite{DBLP:conf/cade/GaoAC12}.
\begin{definition}\label{def:pred-approx}%
Let $F$ be a formula in separated linear form and
let $\delta\in\mathbb Q_{>0}$.
We inductively define the $\delta$-weakening $F_\delta$ of $F$.
\begin{itemize}
\item
    If $F$ is linear, let $F_\delta\coloneqq F$.
\item
    If $F$ is a non-linear constraint $f(\vec x)\diamond 0$, let
\[
    F_\delta\coloneqq\begin{cases}
        f(\vec x)-\delta\diamond 0,&\text{if}~\diamond\in\{<,\leq\} \\
        f(\vec x)+\delta\diamond 0,&\text{if}~\diamond\in\{>,\geq\} \\
        |f(\vec x)|-\delta\leq 0,&\text{if}~\diamond\in\{=\} \\
        (f(\vec x)<0\lor f(\vec x)>0)_\delta,&\text{if}~\diamond\in\{\neq\}\text.
    \end{cases}
\]
\item
    Otherwise, $F$ is $A\circ B$ with $\circ\in\{\land,\lor\}$. Let $F_\delta\coloneqq(A_\delta\circ B_\delta)$.
\end{itemize}

\noindent\emph{$\delta$-deciding} $F$ designates
computing
\[ 
    \begin{cases}
    \text{\UNSAT},&\text{if}~\interp F^\alpha=\False~\text{for all}~\alpha \\
    \text{$\delta$-\SAT},&\text{if}~\interp{F_\delta}^\alpha=\True~\text{for some}~\alpha\text.
\end{cases} \]
In case both answers are valid, the algorithm may output any.

An assignment $\alpha$ with $\interp{F_\delta}^\alpha=\True$ we call a \emph{$\delta$-satisfying assignment} for $F$.
\end{definition}
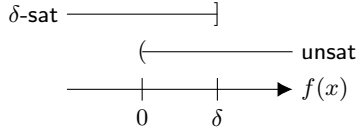
\begin{figure}[tp]
\centering
\begin{tikzpicture}
\draw[->] (-1,0) -- (2,0) node [right] {$f(x)$};
\draw (0,-.15) node [below] {$0$} -- (0,.15);
\draw (1,-.15) node [below] {$\delta$} -- (1,.15);
\draw (-1,1) node [left] {$\delta$-\SAT} -- (1,1) node {$]$};
\draw (0,.5) node {$($} -- (2,.5) node [right] {\UNSAT};
\end{tikzpicture}
\caption{%
    The overlapping cases in the $\delta$-SMT problem $f(x)\leq 0$.
}
\label{fig:lin-mv-cade}
\end{figure}
For non-linear constraints $P$ this definition of the $\delta$-weakening $P_\delta$ corresponds exactly to the notion of $\delta$-weakening $P^{-\delta}$ used in the introduction of $\delta$-decidability \cite[Definition 4.1]{DBLP:conf/lics/GaoAC12}.

\begin{remark}\label{rem:delta:eq-le-ge}
The $\delta$-weakening of a non-linear constraint $f(\vec x)\neq 0$ is a tautology.
\end{remark}

We now consider the problem of $\delta$-deciding quantifier-free formulas
in separated linear form.
The notion of $\delta$-decidability is slightly stronger than in \cite{DBLP:conf/cade/GaoAC12}
in the sense that we do not weaken linear constraints.
Consider a formula $F$ in separated linear form.
As before, we assume variables $\vec x$ to be bounded by linear constraints $\vec x\in D_F$.
We additionally assume that for all non-linear constraints
$P:f(\vec x)\diamond0$ in $\mathcal N$, $f$ is defined on
$\widebar{D_P}$ and, 
in order to simplify the presentation, throughout the rest of paper we will assume only the predicates $\diamond\in\{{>},{\geq}\}$ are part of formulas, since the remaining ones ${<},{\leq},{=}$
can easily be expressed by the former using
simple arithmetic transformations,
and by \Cref{rem:delta:eq-le-ge} predicates $\neq$ are irrelevant for $\delta$-deciding formulas.

An algorithm is \emph{$\delta$-complete}, if it $\delta$-decides
bounded instances%
~\cite{DBLP:conf/cade/GaoAC12}.

\section{$\delta$-\ksmt}\label{sec:delta-ksmt-cade}

Since $\delta$-decidability as introduced above
adapts the condition when a formula is considered to be satisfied to $\delta$-\SAT, this condition has to be reflected in the calculus, which we show solves the bounded $\delta$-SMT problem in this section.
Adding the following
rule $\Fsat[\delta]$ together with the new final state $\delta$-\SAT to \ksmt relaxes the
termination conditions
and turns it into the extended calculus we call $\delta$-\ksmt.
\begin{description}
\item[\ensuremath{\Fsat[\delta]}]
    \emph{Final $\delta$-\SAT.}
    If $(\alpha,\mathcal L,\mathcal N)$ is a $\delta$-\ksmt state
    where $\alpha$ is a total assignment
    and $\interp{\mathcal L\land\mathcal N_\delta}^\alpha=\True$, transition to the $\delta$-\SAT state.
\end{description}
The applicability conditions on the rules $(L)$ and $\Fsat[\delta]$ individually are not decidable~\cite{DBLP:journals/jsyml/Richardson68,Brattka2008},
however, when we compute them simultaneously, we can effectively apply one of these rules,
as we will show in \Cref{th:int:lin-correct-cade}.
In combination with \epsfull{}ness of the computed linearisations (\Cref{th:int:lin-epsfull-cade}),
this leads to \Cref{thm:delta-ksmt-complete-cade},
showing that $\delta$-\ksmt is a $\delta$-complete decision procedure.

Let us note that if we assume $\delta=0$ then $\delta$-\ksmt would just reduce to \ksmt
as $\Fsat$ and $\Fsat[\delta]$ become indistinguishable, but in the following we always assume $\delta>0$.

In the following sub-section, we prove that terminating derivations of the $\delta$-\ksmt calculus lead to correct results.
Then, in \Cref{sec:pf-snd-thm-cade},
we present a concrete algorithm
for applying rules $(L)$ and $\Fsat[\delta]$ and show its linearisations to be \epsfull,
which is sufficient to ensure termination, as shown in \Cref{th:bounded}.
These properties lead to a $\delta$-complete decision procedure.
In \Cref{sec:eps-from-M-cade} we develop a more  practical algorithm for $\epsilon$-full linearisations that does not require computing a uniform modulus of continuity.

\subsection{Soundness}
In this section we show soundness of the $\delta$-\ksmt calculus, that is,
validity of its derivations.
In particular, this implies that
derivability of the final states $\UNSAT$, $\delta$-\SAT and \SAT directly corresponds to unsatisfiability, $\delta$-satisfiability and satisfiability of the original formula, respectively.

\begin{lemma}\label{lem:preserve_assign-cade}
For all $\delta$-\ksmt derivations of $S'=(\alpha',\mathcal L',\mathcal N)$ from a state $S=(\alpha,\mathcal L,\mathcal N)$
and for all total assignments $\beta$,
$\interp{\mathcal L\land\mathcal N}^\beta=
\interp{\mathcal L'\land\mathcal N}^\beta$.
\end{lemma}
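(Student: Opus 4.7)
The plan is to prove the lemma by induction on the length $n$ of the $\delta$-\ksmt derivation from $S$ to $S'$. The base case $n=0$ is immediate since then $\mathcal L'=\mathcal L$. For the inductive step, it suffices to consider a single transition $(\alpha,\mathcal L,\mathcal N)\transition(\alpha',\mathcal L',\mathcal N)$ and show $\interp{\mathcal L\land\mathcal N}^\beta=\interp{\mathcal L'\land\mathcal N}^\beta$ for every total assignment $\beta$; the lemma then follows by composing with the inductive hypothesis. Note that the non-linear component $\mathcal N$ is identical in $S$ and $S'$ for all four rules, so it is really an equivalence of the linear parts modulo $\mathcal N$ that must be established.

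I would case-split on the rule applied. For rules $(A)$ and $(B)$ the linear part is unchanged, i.e.\ $\mathcal L'=\mathcal L$, so there is nothing to prove. For rule $(R)$, we have $\mathcal L'=\mathcal L\cup R_{\alpha,\mathcal L,z}$, and by the definition of resolvent (recalled just before Definition~\ref{def:linearisation}) every clause of $R_{\alpha,\mathcal L,z}$ is implied by $\mathcal L$. Hence for any total $\beta$, $\interp{\mathcal L}^\beta=\True$ already forces $\interp{R_{\alpha,\mathcal L,z}}^\beta=\True$, so $\interp{\mathcal L}^\beta=\interp{\mathcal L'}^\beta$ and a fortiori the equality with $\mathcal N$ conjoined holds.

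For rule $(L)$, we have $\mathcal L'=\mathcal L\cup\{L_{\alpha,P}\}$ for some $P\in\mathcal N$ with $\interp{P}^\alpha=\False$, and by item~\ref{def:linearisation:it1} of Definition~\ref{def:linearisation} the linearisation satisfies $\forall\beta:\interp{P}^\beta=\True\implies\interp{L_{\alpha,P}}^\beta=\True$. Here it is crucial that the implication is universally quantified over all (total) assignments $\beta$, not just the current $\alpha$. Fix a total $\beta$: if $\interp{\mathcal N}^\beta=\False$ both sides of the target equality are $\False$; if $\interp{\mathcal N}^\beta=\True$ then $\interp{P}^\beta=\True$ (since $P\in\mathcal N$), so $\interp{L_{\alpha,P}}^\beta=\True$, and therefore adding $L_{\alpha,P}$ does not change the evaluation of the linear part. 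In either case $\interp{\mathcal L\land\mathcal N}^\beta=\interp{\mathcal L'\land\mathcal N}^\beta$.

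There is essentially no obstacle here; the only conceptual point worth highlighting is that the lemma must speak about $\mathcal L\land\mathcal N$ rather than $\mathcal L$ alone, because linearisations are implied by $\mathcal N$ rather than by $\mathcal L$. The new rule $\Fsat[\delta]$ is terminal and produces no successor state of the form $(\alpha',\mathcal L',\mathcal N)$, so it does not contribute to the induction; hence the proof goes through unchanged compared to the original \ksmt setting.
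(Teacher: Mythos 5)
Your proof is correct and follows essentially the same route as the paper's: induction on the derivation length with a case split on the applied rule, where $(A)$ and $(B)$ are trivial, $(R)$ is handled by the resolvent being implied by $\mathcal L$, and $(L)$ by condition~\ref{def:linearisation:it1} of \Cref{def:linearisation} together with the fact that adding a clause can only strengthen $\mathcal L$. Your explicit case analysis on $\interp{\mathcal N}^\beta$ in the $(L)$ step and the remark about the terminal rules merely spell out details the paper leaves implicit.
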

\begin{proof}
    Let $\beta$ be a total assignment of the variables in $\mathcal L\land\mathcal N$.
    Since the set of variables remains unchanged by $\delta$-\ksmt derivations,
    $\beta$ is a total assignment for $\mathcal L'\land\mathcal N$ as well.
    Let $S'=(\alpha',\mathcal L',\mathcal N)$ be derived
    from $S=(\alpha,\mathcal L,\mathcal N)$
    by a single application of one of $\delta$-\ksmt rules.
    By the structure of $S'$, its derivation was not caused by
    neither $\Funsat,\Fsat$ or $\Fsat[\delta]$.
    For rules $(A)$ and $(B)$ there is nothing to show since
    $\Lin=\Lin'$.
    If $(R)$ caused $S\mapsto S'$, the claim holds by
    soundness of arithmetical
    resolution. Otherwise $(L)$ caused
    $S\mapsto S'$ in which
    case
    the direction $\Rightarrow$ follows from the definition of a linearisation (condition~\ref{def:linearisation:it1} in \Cref{def:linearisation})
    while the other direction trivially holds since $\Lin\subseteq\Lin'$.

    The condition on derivations of arbitrary lengths then follows by induction.
\end{proof}

\begin{lemma}\label{thm:delta_sound-cade''}
Let $\delta\in\mathbb Q_{>0}$.
Consider a formula $G=\mathcal L_0\land\mathcal N$ in separated linear form and
let $S=(\alpha,\mathcal L,\mathcal N)$ be a $\delta$-\ksmt state derivable from the initial state $S_0=(\nil,\mathcal L_0,\mathcal N)$.
The following hold.
\begin{itemize}
\item
    If rule $\Funsat$ is applicable to $S$
    then $G$ is unsatisfiable.
\item
    If rule $\Fsat[\delta]$ is applicable to $S$
    then $\alpha$ is a $\delta$-satisfying assignment for $G$, hence $G$ is $\delta$-satisfiable.
\item
    If rule $\Fsat$ is applicable to $S$
    then $\alpha$ is a satisfying assignment for $G$,
    hence $G$ is satisfiable.
\end{itemize}
\end{lemma}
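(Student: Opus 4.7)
The plan is to handle each of the three bullets separately, using Lemma~\ref{lem:preserve_assign-cade} as the main tool and exploiting the elementary fact that the derivation rules $(A),(R),(B),(L)$ only ever \emph{add} linear clauses (or leave $\mathcal L$ unchanged in the case of $(A),(B)$), so that $\mathcal L_0\subseteq\mathcal L$ throughout any derivation from $S_0$.

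For the \Funsat case, the applicability precondition is that $\mathcal L$ contains a trivial contradiction, so $\interp{\mathcal L}^\nil=\False$. This implies $\interp{\mathcal L\land\mathcal N}^\beta=\False$ for \emph{every} total assignment $\beta$. Applying Lemma~\ref{lem:preserve_assign-cade} to the derivation $S_0\Rightarrow^\ast S$ then yields $\interp{\mathcal L_0\land\mathcal N}^\beta=\False$ for every total $\beta$, which is exactly unsatisfiability of $G$.

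For the \Fsat case, the precondition is that $\alpha$ is total with $\interp{\mathcal L}^\alpha=\True$ and none of $(A),(R),(B),(L)$ applies. The key observation here is that since $\alpha$ is total, each $P\in\mathcal N$ evaluates under $\alpha$ to either \True{} or \False; but if some $P$ evaluated to \False{} then rule $(L)$ would be applicable (a linearisation always exists, e.g.\ the trivial clause obtained from the negation of~$P$ restricted to~$\alpha$), contradicting the assumption. Hence $\interp{P}^\alpha=\True$ for all $P\in\mathcal N$, so $\interp{\mathcal L\land\mathcal N}^\alpha=\True$, and Lemma~\ref{lem:preserve_assign-cade} transports this back to $\interp{\mathcal L_0\land\mathcal N}^\alpha=\True$, i.e.\ $\alpha$ satisfies $G$.

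For the $\Fsat[\delta]$ case, the precondition gives $\alpha$ total with $\interp{\mathcal L\land\mathcal N_\delta}^\alpha=\True$. Since by Definition~\ref{def:pred-approx} linear constraints are not weakened, $G_\delta=\mathcal L_0\land\mathcal N_\delta$, so it suffices to show $\interp{\mathcal L_0}^\alpha=\True$. This follows directly from the monotonicity observation $\mathcal L_0\subseteq\mathcal L$ together with $\interp{\mathcal L}^\alpha=\True$. Combined with $\interp{\mathcal N_\delta}^\alpha=\True$ from the precondition, we conclude $\interp{G_\delta}^\alpha=\True$, i.e.\ $\alpha$ is a $\delta$-satisfying assignment and $G$ is $\delta$-satisfiable. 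The only mildly subtle step is the \Fsat case, where one must argue that non-applicability of $(L)$ forces every non-linear atomic constraint to evaluate to \True{} rather than merely to \emph{not}~\False; this is immediate because $\alpha$ is total, but it is the place where the proof really uses the ``none of the rules applies'' clause of \Fsat.
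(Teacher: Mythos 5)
Your proposal is correct, and the $\Fsat[\delta]$ case coincides with the paper's argument ($\mathcal L_0\subseteq\mathcal L$ plus $G_\delta=\mathcal L_0\land\mathcal N_\delta$). Where you diverge is in the $\Funsat$ and $\Fsat$ cases: the paper simply observes that only \ksmt rules were applied in reaching $S$ and delegates both bullets to the already-established soundness of \ksmt (Lemma~2 of the FroCoS'19 paper), whereas you re-derive them from \Cref{lem:preserve_assign-cade}. Your route is self-contained and arguably more informative, but it forces you to justify one extra claim that the citation would have covered: that non-applicability of $(L)$ under a total $\alpha$ rules out $\interp P^\alpha=\False$ for every $P\in\mathcal N$, i.e.\ that a linearisation of $P$ at $\alpha$ \emph{always exists} whenever $\interp P^\alpha=\False$. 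Your phrasing (``the negation of $P$ restricted to $\alpha$'') is loose, but the claim is sound: since ${\neq}\in\mathcal P$, the point-exclusion clause $C=\bigvee_i\bigl(x_i-\alpha(x_i)\neq 0\bigr)$ is a linear clause with $\interp C^\alpha=\False$, and any $\beta$ with $\interp P^\beta=\True$ must differ from $\alpha$ on the variables of $P$ (as $\interp P^\alpha=\False$), so $\interp C^\beta=\True$; hence $C$ meets both conditions of \Cref{def:linearisation}. With that detail made explicit, both approaches are equally valid; the paper's is shorter, yours avoids reliance on the external lemma.
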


\begin{proof}
Let formula $G$ and states $S_0,S$ be as in the premise.
As $S$ is not final in $\delta$-\ksmt, only \ksmt rules have been applied in deriving it. The statements for rules $\Funsat$ and $\Fsat$ thus hold by soundness of \ksmt~\cite[Lemma~2]{DBLP:conf/frocos/BrausseKKM19}.

Assume $\Fsat[\delta]$ is applicable to $S$,
that is, $\interp{\mathcal L\land\mathcal N_\delta}^\alpha$ is \True.
Then, since $\mathcal L_0\subseteq\mathcal L$, we conclude that $\alpha$ satisfies $\mathcal L_0\land\mathcal N_\delta$ which, according to \Cref{def:pred-approx}, equals $G_\delta$. Therefore $\alpha$ is a $\delta$-satisfying assignment for $G$.
\end{proof}
Since the only way to derive one of the final states \UNSAT, $\delta$-\SAT and \SAT from the initial state in $\delta$-\ksmt is by application of the rule $\Funsat,\Fsat[\delta]$ and $\Fsat$,
respectively,
as corollary of \Cref{lem:preserve_assign-cade,thm:delta_sound-cade''} we obtain soundness.
\begin{theorem}[Soundness]\label{thm:delta_sound-cade}
Let $\delta\in\mathbb Q_{>0}$. The $\delta$-\ksmt calculus is sound.
\end{theorem}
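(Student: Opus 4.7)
My plan is to derive the theorem essentially as a corollary of \Cref{lem:preserve_assign-cade,thm:delta_sound-cade''}, which together have already done the substantive work; the remaining step is a structural argument about how final states can be reached.

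First, I would unpack what soundness means for this calculus: for every derivation starting in the initial state $S_0=(\nil,\mathcal L_0,\mathcal N)$ of a formula $G=\mathcal L_0\land\mathcal N$, if the derivation ends in \UNSAT then $G$ is unsatisfiable, if it ends in $\delta$-\SAT then $G$ is $\delta$-satisfiable, and if it ends in \SAT then $G$ is satisfiable. Inspecting the transition rules, the final states are only produced by $\Funsat$, $\Fsat[\delta]$ and $\Fsat$ respectively, and none of the non-terminal rules $(A),(R),(B),(L)$ can produce them. Hence any terminating derivation factors as a (possibly empty) sequence of non-final transitions $S_0\transition^* S$ followed by exactly one application of $\Funsat$, $\Fsat[\delta]$ or $\Fsat$ at the state $S=(\alpha,\mathcal L,\mathcal N)$.

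Next I would apply \Cref{thm:delta_sound-cade''} directly to the state $S$ reached immediately before the final transition: since $S$ is derivable from $S_0$ and the corresponding final rule is applicable at $S$, the lemma yields exactly the desired conclusion about $G$ in each of the three cases. This is already enough to establish soundness in the $\Funsat$ and $\Fsat[\delta]$ cases (where the conclusion is a property of $G$ itself, independent of the intermediate constraints). For the $\Fsat$ case, $\alpha$ is by \Cref{thm:delta_sound-cade''} a satisfying assignment of $\mathcal L_0\land\mathcal N$ at the state $S$, which is \emph{the same formula} as the original $G$; so nothing further is needed.

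The only place \Cref{lem:preserve_assign-cade} is implicitly needed is to justify that it is legitimate to read off properties of $G$ from the state $S$, because the linear part may have grown during the derivation via rules $(R)$ and $(L)$. Concretely, \Cref{lem:preserve_assign-cade} ensures that $\interp{\mathcal L\land\mathcal N}^\beta=\interp{\mathcal L_0\land\mathcal N}^\beta$ for every total assignment $\beta$, so satisfiability of $\mathcal L\land\mathcal N$ and of $G$ coincide, and likewise for $\delta$-satisfiability once it is observed that the $\delta$-weakening commutes with conjunction on the linear-plus-nonlinear separation (by \Cref{def:pred-approx}, the linear part is unchanged by weakening). I expect the only mildly delicate step to be this last observation for the $\delta$-\SAT case: one has to check that $\Fsat[\delta]$ certifies $\interp{\mathcal L\land\mathcal N_\delta}^\alpha=\True$, and combine this with \Cref{lem:preserve_assign-cade} applied to any extension of $\alpha$ to a total assignment, to conclude $\interp{\mathcal L_0\land\mathcal N_\delta}^\alpha=\interp{G_\delta}^\alpha=\True$; but this is already covered inside the proof of \Cref{thm:delta_sound-cade''}, so the theorem follows immediately by combining the two lemmas with the structural observation above.
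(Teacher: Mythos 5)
Your proposal is correct and follows essentially the same route as the paper, which likewise obtains the theorem as an immediate corollary of \Cref{lem:preserve_assign-cade} and \Cref{thm:delta_sound-cade''} together with the observation that each final state can only be reached by its corresponding final rule. Your additional remarks on why the $\delta$-weakening interacts cleanly with the unchanged linear part are fine but, as you note, already absorbed into the proof of \Cref{thm:delta_sound-cade''}.
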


\subsection{$\delta$-completeness}\label{sec:pf-snd-thm-cade}

We proceed by
introducing \Cref{alg:box-linearisation'-cade}
computing linearisations and
deciding which of the rules $\Fsat[\delta]$ and $(L)$ to apply.
These linearisations are then shown to be \epsfull for some $\eps>0$ depending on the bounded instance.
By \Cref{th:bounded}, this property implies termination, showing that $\delta$-\ksmt is a $\delta$-complete decision procedure.

Given a non-final $\delta$-\ksmt state,
the function \Call{nlinStep$_\delta$}{} in \Cref{alg:box-linearisation'-cade}
computes a $\delta$-\ksmt state derivable from it by application of $\Fsat[\delta]$ or $(L)$. This is done by evaluating the non-linear functions and adding a linearisation $\ell$ based on their uniform moduli of continuity as needed.
To simplify the algorithm, it assumes 
total assignments as input.
It is possible to relax this requirement, e.g., by invoking rules $(A)$ or $(R)$ instead of returning $\delta$-\SAT for partial assignments.

\begin{algorithm}[tp]
\setlength{\columnsep}{0cm}
\begin{multicols}{2}
\begin{algorithmic}
    \Function{linearise$_\delta$}{$f,\vec x,\diamond,\alpha$}
        \State compute 
            $p\geq-\lfloor\log_2(\min\{1,\delta/4\})\rfloor$
        \State $\varphi\gets(n\mapsto\interp{\vec x}^\alpha)$
        \State $\eps\gets2^{-\mu_f(p)}$
        \State $\tilde y\gets M_f^\varphi(p)$
        \If{$\tilde y\diamond-\delta/2$}
            \State\Return $\None$
        \EndIf
        \State\Return $(\vec x\notin B(\interp{\vec x}^\alpha, \eps))$
    \EndFunction
\columnbreak
\Function{nlinStep$_\delta$}{$\alpha,\mathcal L,\mathcal N$}
    \For{$P:(f(\vec x)\diamond 0)$ \textbf{in} $\mathcal N$}
        \State $\ell\gets{}$\Call{linearise$_\delta$}{$f,\vec x,\diamond,\alpha$}
        \If{$\ell\neq\None
        $}
            \State\Return $(\alpha,\mathcal L\cup\{\ell\},\mathcal N)$
            \Comment 
                $(L)$
        \EndIf
    \EndFor
    \State \Return $\delta$-\SAT
    \Comment 
        $\Fsat[\delta]$
\EndFunction
\end{algorithmic}
\end{multicols}
\caption{%
    (\textsc{nlinStep$_\delta$})
    Algorithm computing a $\delta$-\ksmt derivation
    according to either rule $(L)$ or $\Fsat[\delta]$
    from a state $(\alpha,\mathcal L,\mathcal N)$ where $\alpha$ is total.
    The functions $f$ are assumed to be computed by machines $M_f^?$
    and $\mu_f$ to be a computable uniform modulus of continuity of $f$.
}
\label{alg:box-linearisation'-cade}
\end{algorithm}

\begin{lemma}\label{th:int:lin-correct-cade}
Let $\delta\in\mathbb Q_{>0}$
and let $S=(\alpha,\mathcal L,\mathcal N)$ be a $\delta$-\ksmt state
where $\alpha$ is total and
$\interp{\mathcal L}^\alpha=\True$. Then
\Call{nlinStep$_\delta$}{$\alpha,\mathcal L,\mathcal N$}
computes
a state derivable by application of either $(L)$ or $\Fsat[\delta]$ to $S$.
\end{lemma}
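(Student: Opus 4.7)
The plan is to proceed by case analysis on the return value of the function \textsc{nlinStep$_\delta$}. Its body iterates over the non-linear constraints $P:f(\vec x)\diamond 0$ in $\mathcal N$ and either returns a modified state (when some call to \textsc{linearise$_\delta$} yields a clause $\ell\neq\None$), or returns $\delta$-\SAT if every call returns $\None$. These two cases must match the preconditions of rules $(L)$ and $\Fsat[\delta]$, respectively. Throughout, I would record the uniform bound
$|\tilde y - f(\interp{\vec x}^\alpha)| \leq 2^{-p} \leq \delta/4$,
which follows from the choice of $p$ together with the fact that the constant sequence $\varphi_n=\interp{\vec x}^\alpha$ is a valid name (in the sense of \Cref{def:computable}) of the rational point $\interp{\vec x}^\alpha$ and $M_f^?$ computes~$f$.

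\textbf{Case $\delta$-\SAT.} Assume every invocation of \textsc{linearise$_\delta$} returned $\None$. Then for each constraint $P:f(\vec x)\diamond 0$ with $\diamond\in\{>,\geq\}$ the test $\tilde y\diamond -\delta/2$ passed. Combined with the approximation bound above I get $f(\interp{\vec x}^\alpha)\diamond -\delta/2 - \delta/4$, hence $f(\interp{\vec x}^\alpha)+\delta \diamond \delta/4>0$. By \Cref{def:pred-approx} this is precisely $\interp{P_\delta}^\alpha=\True$. Conjoining this with the hypothesis $\interp{\mathcal L}^\alpha=\True$ and the fact that $\alpha$ is total fulfils the preconditions of rule $\Fsat[\delta]$.

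\textbf{Case $(L)$.} Suppose some invocation of \textsc{linearise$_\delta$} returned the clause $\ell=(\vec x\notin B(\interp{\vec x}^\alpha,\eps))$, where $\eps=2^{-\mu_f(p)}$, for a constraint $P:f(\vec x)\diamond 0$. I would need to verify the applicability conditions of rule $(L)$: that $\interp{\mathcal L}^\alpha\neq\False$ (given), that $\interp{P}^\alpha=\False$, and that $\ell$ satisfies the two conditions of \Cref{def:linearisation}. For $\interp{P}^\alpha=\False$: the test $\tilde y\diamond -\delta/2$ failed, so together with the approximation bound one derives $f(\interp{\vec x}^\alpha)<-\delta/4<0$, contradicting both $\diamond=>$ and $\diamond=\geq$. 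The condition $\interp{\ell}^\alpha=\False$ is immediate because $\interp{\vec x}^\alpha\in B(\interp{\vec x}^\alpha,\eps)$. For the remaining condition, I would argue by contraposition: if $\interp{\ell}^\beta=\False$ then $\Vert\interp{\vec x}^\beta-\interp{\vec x}^\alpha\Vert<\eps=2^{-\mu_f(p)}$, so by the defining property of a uniform modulus of continuity (\ref{eq:mu}) we have $|f(\interp{\vec x}^\beta)-f(\interp{\vec x}^\alpha)|\leq 2^{-p}\leq \delta/4$, whence $f(\interp{\vec x}^\beta)\leq f(\interp{\vec x}^\alpha)+\delta/4<0$, so $\interp{P}^\beta=\False$.

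The main subtlety, and likely the point requiring most care, is the applicability of the uniform modulus of continuity in the contrapositive step: it requires both $\interp{\vec x}^\alpha$ and $\interp{\vec x}^\beta$ to lie in the compact domain on which $\mu_f$ is a modulus. The first is ensured by the bounded-instance assumption together with $\interp{\mathcal L}^\alpha=\True$, which enforces the box constraints placing $\interp{\vec x}^\alpha\in\widebar{D_P}$; the second requires either that $\eps$ is small enough that $B(\interp{\vec x}^\alpha,\eps)$ stays inside $\dom f$, or that assignments $\beta$ with $\interp{\vec x}^\beta\notin\dom f$ are treated as not satisfying $P$ so that the implication holds vacuously. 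Apart from this domain bookkeeping, everything else reduces to routine arithmetic in $\delta/4$-margins enabled by the choice $p\geq-\lfloor\log_2(\min\{1,\delta/4\})\rfloor$.
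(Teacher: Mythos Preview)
Your proof is correct and follows essentially the same approach as the paper's: case-split on whether \textsc{linearise$_\delta$} returns $\None$, use the approximation bound $|\tilde y-f(\interp{\vec x}^\alpha)|\le 2^{-p}\le\delta/4$, and then verify the preconditions of $\Fsat[\delta]$ respectively $(L)$ via the same $\delta/4$-margin arithmetic and the uniform modulus~\eqref{eq:mu}. The domain subtlety you single out is handled in the paper precisely by your second option, namely by restricting attention to $\vec z\in B(\interp{\vec x}^\alpha,\eps_P)\cap\widebar{D_P}$ when invoking~\eqref{eq:mu}.
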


\begin{proof}
In the proof we will use notions from computable analysis, as defined in \Cref{sec:ca}.
Let $(\alpha,\mathcal L,\mathcal N)$ be a state as in the premise
and let $P:f(\vec x)\diamond0$ be a non-linear constraint in $\mathcal N$.
Let $M_f^?$ compute $f$ as in \Cref{alg:box-linearisation'-cade}.
The algorithm computes a rational approximation $\tilde y=M_f^{(\interp{\vec x}^\alpha)_i}(p)$
of $f(\interp{\vec x}^\alpha)$
where $p\geq-\lfloor\log_2(\min\{1,\delta/4\})\rfloor\in\mathbb N$.
$\interp{\mathcal L}^\alpha=\True$ implies
$\interp{\vec x}^\alpha\in D_P\subseteq\dom f$, thus the computation of $\tilde y$ terminates.
Since $M_f^?$ computes $f$, $\tilde y$ is accurate up to $2^{-p}\leq\delta/4$, that is,
$\tilde y\in[f(\interp{\vec x}^\alpha)\pm\delta/4]$.
By assumption $\diamond\in\{{>},{\geq}\}$, thus
    \begin{enumerate}
    \item\label{th:int:lin:delta-sat} $\tilde y\mathrel\diamond-\delta/2$
        implies $f(\interp{\vec x}^\alpha)\mathrel\diamond-\delta$, which is equivalent to $\interp{P_\delta}^\alpha=\True$, and
    \item\label{th:int:lin:unsat} $\neg(\tilde y\mathrel\diamond-\delta/2)$
        implies $\neg(f(\interp{\vec x}^\alpha)\mathrel\diamond-\delta/2+\delta/4)$, which in turn implies $\interp P^\alpha=\False$
        and the applicability of rule $(L)$.
    \end{enumerate}
For \cref{th:int:lin:delta-sat} no linearisation is necessary and indeed the algorithm does not linearise $P$.
Otherwise (\Cref{th:int:lin:unsat}),
it adds the linearisation $(\vec x\notin B(\interp{\vec x}^\alpha,\eps_P))$ to the linear clauses.
Since $\interp{\vec x}^\alpha\in D_P$
by \cref{eq:mu} we obtain that
$0\notin B(f(\vec z),\delta/4)$ holds,
implying $\neg(f(\vec z)\diamond 0)$,
for all $\vec z\in B(\interp{\vec x}^\alpha,\eps_P)\cap\widebar{D_P}$.
Hence,
$(\vec x\notin B(\interp{\vec x}^\alpha,\eps_P))$
is a linearisation of $P$ at $\alpha$.

In case \Call{nlinStep$_\delta$}{$\alpha,\mathcal L,\mathcal N$} returns
$\delta$-\SAT, the premise of \Cref{th:int:lin:delta-sat} holds for every non-linear constraint in $\mathcal N$, that is, $\interp{\mathcal N_\delta}^\alpha=\True$.
By assumption $\interp{\mathcal L}^\alpha=\True$, hence the application of the $\Fsat[\delta]$ rule deriving $\delta$-\SAT is possible in $\delta$-\ksmt.
\end{proof}

\begin{lemma}\label{th:int:lin-epsfull-cade}
For any bounded instance $\mathcal L_0\land\mathcal N$
there is a computable $\eps\in\mathbb Q_{>0}$ such that
any $\delta$-\ksmt
run starting in $(\nil,\mathcal L_0,\mathcal N)$, where applications of $(L)$ and $\Fsat[\delta]$ are performed
by \Call{nlinStep$_\delta$}{}, is \epsfull.
\end{lemma}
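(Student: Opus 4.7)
The plan is to exhibit a single constant $\eps\in\mathbb Q_{>0}$, depending only on the bounded instance $\mathcal L_0\land\mathcal N$ and on $\delta$, such that \emph{every} linearisation emitted by \Call{nlinStep$_\delta$}{} during any run is $\eps$-full. The ``all but finitely many'' clause of \Cref{def:eps-full} is then trivially satisfied.

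The central observation is that the two quantities determining the radius of the linearisation produced inside \Call{linearise$_\delta$}{$f,\vec x,\diamond,\alpha$} are independent of the current state $(\alpha,\mathcal L,\mathcal N)$: the precision $p\geq-\lfloor\log_2(\min\{1,\delta/4\})\rfloor$ is fixed as soon as $\delta$ is fixed, and by inspection of the transition rules the set $\mathcal N$ of non-linear constraints is preserved throughout a $\delta$-\ksmt run. Hence for each $P:f(\vec x)\diamond 0$ in the finite set $\mathcal N$ the radius $\eps_P\coloneqq 2^{-\mu_f(p)}$ computed by the algorithm is the same at every application of $(L)$ to $P$, regardless of $\alpha$. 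Because $F$ is a bounded instance, $\widebar{D_P}\subseteq\dom f$ is compact, so by the fact cited at the end of \Cref{sec:ca} the computable function $f$ admits a computable uniform modulus of continuity $\mu_f$ on $\widebar{D_P}$. Consequently each $\eps_P$ is a computable positive rational.

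I would then set $\eps\coloneqq\min\{\eps_P:P\in\mathcal N\}$, which is again a computable positive rational since $\mathcal N$ is finite. By \Cref{th:int:lin-correct-cade} every linearisation appearing along the run is produced by \Call{linearise$_\delta$}{} and hence has the form $C=(\vec x\notin B(\interp{\vec x}^\alpha,\eps_P))$ for some $P\in\mathcal N$ and the assignment $\alpha$ at that step. By construction, any $\beta$ with $\interp{\vec x}^\beta\in B(\interp{\vec x}^\alpha,\eps_P)$ falsifies $C$; since $\eps\leq\eps_P$ implies $B(\interp{\vec x}^\alpha,\eps)\subseteq B(\interp{\vec x}^\alpha,\eps_P)$, the same $C$ is falsified on $B(\interp{\vec x}^\alpha,\eps)$, i.e.\ $C$ is $\eps$-full in the sense of \Cref{def:eps-full}. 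Thus every linearisation in the run is $\eps$-full, which is strictly stronger than what the lemma requires.

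There is no real obstacle here: the entire argument reduces to noticing what \Call{linearise$_\delta$}{} writes down and separating the two sources of ``parameter drift'' that one might worry about. The only points deserving an explicit line in the write-up are that (i) the set $\mathcal N$ is invariant under all $\delta$-\ksmt rules (a one-line check of $(A),(R),(B),(L)$), and (ii) the precision $p$ inside \Call{linearise$_\delta$}{} depends only on $\delta$ (immediate from the pseudocode); together these ensure that the finite collection $\{\eps_P\}_{P\in\mathcal N}$ — and hence $\eps$ — is well-defined once and for all before the run starts.
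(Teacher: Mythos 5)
Your proof is correct and follows essentially the same route as the paper's: fix $\eps_P=2^{-\mu_f(p)}$ per constraint (well-defined and computable because $p$ depends only on $\delta$, $\mathcal N$ is fixed, and $\mu_f$ is a computable uniform modulus on the compact $\widebar{D_P}$), then take $\eps=\min\{\eps_P:P\in\mathcal N\}$ over the finitely many constraints. Your explicit remarks on the invariance of $\mathcal N$ and the monotonicity of $\eps$-fullness under shrinking $\eps$ are details the paper leaves implicit, but the argument is the same.
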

\begin{proof}
Let $P:f(\vec x)\diamond 0$ be a non-linear constraint in $\mathcal N$.
Since $\mathcal L_0\land\mathcal N$ is a bounded instance, $D_P\subseteq\mathbb R^n$ is also bounded.
Let $\eps_P\coloneqq2^{-\mu_f(p)}$
where $p\geq-\lfloor\log_2(\min\{1,\delta/4\})\rfloor\in\mathbb N$
as in \Cref{alg:box-linearisation'-cade}.
As $\mu_f$ is a uniform modulus of continuity,
the inequalities in the following construction hold on the whole domain
 $\widebar{D_P}$ of $f$ and do not depend on the concrete assignment $\alpha$ where the linearisation is performed.
Since $\log_2$ and $\mu_f$ are computable, so are $p$ and $\eps_P$.
There are finitely many non-linear constraints $P$ in $\mathcal N$,
therefore the linearisations the algorithm \Call{nlinStep$_\delta$}{}
computes are \epsfull with $\epsilon=\min\{\eps_P:P~\text{in}~\mathcal N\}>0$.
\end{proof}

We call $\delta$-\ksmt derivations when linearisation are computed using Algorithm~\ref{alg:box-linearisation'-cade}  $\delta$-\ksmt with full-box linearisations, or \emph{$\delta$-\ksmt-fb} for short.
As the runs computed by it are \epsfull for $\eps>0$, by \Cref{th:bounded} they terminate.

\begin{theorem}\label{thm:delta-ksmt-complete-cade}
$\delta$-\ksmt-fb is a $\delta$-complete decision procedure. 
\end{theorem}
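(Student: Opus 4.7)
The plan is to combine soundness (\Cref{thm:delta_sound-cade}), termination of \epsfull runs on bounded instances (\Cref{th:bounded}), and the two properties of \Call{nlinStep$_\delta$}{} established in \Cref{th:int:lin-correct-cade,th:int:lin-epsfull-cade}, to conclude that on every bounded instance $F=\mathcal L_0\land\mathcal N$ the calculus halts in a final state whose label is a valid $\delta$-decision answer for $F$.

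First I would dispatch termination. By \Cref{th:int:lin-epsfull-cade} there is a computable $\eps\in\mathbb Q_{>0}$ such that every linearisation produced by \Call{nlinStep$_\delta$}{} during a $\delta$-\ksmt-fb run starting in $(\nil,\mathcal L_0,\mathcal N)$ is \epsfull; hence the whole run is \epsfull, and \Cref{th:bounded} then yields termination.

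Next I would verify that the derivation cannot stall in a non-final state. In any reachable $(\alpha,\mathcal L,\mathcal N)$, some rule is applicable: when $\interp{\mathcal L}^\alpha=\False$ either $(B)$ applies (if $\alpha\neq\nil$) or $\Funsat$ does; when $\interp{\mathcal L}^\alpha\neq\False$ with an unassigned variable, the usual case split between $(A)$ and $(R)$ applies; and when $\alpha$ is total with $\interp{\mathcal L}^\alpha=\True$, \Cref{th:int:lin-correct-cade} guarantees that the call \Call{nlinStep$_\delta$}{$\alpha,\mathcal L,\mathcal N$} effectively realises one of $(L)$ or $\Fsat[\delta]$. Combined with termination, every run thus reaches a final state in $\{\UNSAT,\delta\text{-}\SAT,\SAT\}$.

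Finally, soundness (\Cref{thm:delta_sound-cade}) yields correctness of the answer: $\UNSAT$ certifies unsatisfiability of $F$, $\delta$-\SAT produces a $\delta$-satisfying assignment, and $\SAT$ produces a satisfying assignment, which by \Cref{def:pred-approx} is also $\delta$-satisfying (since the weakening preserves truth of $\geq,>$-literals). Each of these is an admissible output of the $\delta$-decision problem in the sense of \Cref{def:pred-approx}. The step I expect to be the main obstacle is the progress argument in the total-assignment case: the applicability of $(L)$ and of $\Fsat[\delta]$ in isolation is undecidable, and it is exactly here that \Call{nlinStep$_\delta$}{} must be invoked to decide both questions simultaneously via the approximation $\tilde y$ and the uniform modulus of continuity $\mu_f$, using the harmless overlap of width $\delta/2$ built into \Cref{alg:box-linearisation'-cade}.
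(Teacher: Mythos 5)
Your proposal is correct and follows essentially the same route as the paper, whose proof is simply the combination of soundness (\Cref{thm:delta_sound-cade}) with termination via \Cref{th:int:lin-epsfull-cade} and \Cref{th:bounded}. The additional progress argument you spell out (that no reachable non-final state is stuck, with \Cref{th:int:lin-correct-cade} resolving the undecidable-in-isolation choice between $(L)$ and $\Fsat[\delta]$) is left implicit in the paper but is a sound and worthwhile elaboration.
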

\begin{proof}
$\delta$-\ksmt-fb is sound (\Cref{thm:delta_sound-cade}) and
terminates on bounded instances (\Cref{th:bounded,th:int:lin-epsfull-cade}).
\end{proof}

\section{Local \epsfull linearisations} \label{sec:eps-from-M-cade}
In practice, when the algorithm computing \epsfull linearisations described in the previous section is going to be implemented, the question arises of how to get a good uniform modulus of continuity $\mu_f$ for a computable function $f$. Depending on how $f$ is given, there may be several ways of computing it.
Implementations of exact real arithmetic, e.g.,
iRRAM~
\cite{DBLP:conf/cca/Muller00} and
Ariadne~
\cite{https://doi.org/10.1002/rnc.2914}, are usually based on the formalism of function-oracle Turing machines (see \Cref{def:computable}) which allow to compute with representations of computable functions~\cite{DBLP:journals/corr/BrausseS17} including
implicit representations of functions as solutions of
ODEs/PDEs~\cite{DBLP:books/daglib/0087495,DBLP:conf/ershov/BrausseKM15}.
If $f$ is only available as a function-oracle Turing machine $M_f^?$ computing it,
a modulus $\mu_f$ valid on a compact domain can be computed, however,
in general this
is not possible without exploring the behaviour of the function on the whole domain, which in many cases is computationally expensive.
Moreover, since $\mu_f$ is uniform, $\mu_f(n)$ is constant throughout $D_F$, independent of the actual assignment $\alpha$ determining where $f$ is evaluated.
Yet, computable functions admit \emph{local} moduli of continuity that additionally depend on the concrete point in their domain.
In most cases these would provide linearisations with $\eps$ larger than that determined by $\mu_f$ leading to larger regions being excluded,
ultimately resulting in fewer linearisation steps and general speed-up.
Indeed, machines producing finite approximations of $f(x)$ from finite approximations of $x$ internally have to compute some form of local modulus
to guarantee correctness.
In this section, we explore
this approach of obtaining linearisations covering a larger part of the function's domain.

In order to guarantee a positive
bound on the local modulus of continuity extracted directly from
the run of the machine $M_f^?$ computing $f$, it is necessary to employ a restriction on the names of real numbers $M_f^?$ computes on.
The set of
names should in a very precise sense be ``small'', i.e., it has to be compact.
The very general notion of names used in \Cref{def:computable} is too broad to satisfy this criterion since
the space of rational approximations
is not even locally compact.
Here, we present an approach using practical names of real numbers as sequences of dyadic rationals of lengths restricted by accuracy.
For that purpose, we introduce another representation~\cite{DBLP:series/txtcs/Weihrauch00} of $\mathbb R$,
that is, the surjective mapping $\xi:\mathbb D_\omega\to\mathbb R$.
Here, $\mathbb D_\omega$ denotes the set of infinite sequences $\varphi$ of dyadic rationals with bounded length. If $\varphi$ has a limit (in $\mathbb R$), we write $\lim\varphi$.

\begin{definition}\label{def:Domega-xi-Cauchy}
\begin{itemize}
\item
    For $k\in\omega$ let
    $\mathbb D_k\coloneqq\mathbb Z\cdot2^{-(k+1)}=\{m/2^{k+1}:m\in\mathbb Z\}\subset\mathbb Q$
    and let
    $\mathbb D_\omega\coloneqq\bigtimes_{k\in\omega}\mathbb D_k$
    be the set of all sequences $(\varphi_k)_k$ with $\varphi_k\in\mathbb D_k$ for all $k\in\omega$.
    By default, $\mathbb D_\omega$ is endowed with the Baire space topology,
    which corresponds to that induced by the metric
    \[ d:(\varphi,\psi)\mapsto\begin{cases}
        0&\text{if}~\varphi=\psi \\
        1/{\min\{1+n:n\in\omega,\varphi_n\neq\psi_n\}}&\text{otherwise.}
    \end{cases} \]
\item
    Define $\xi:\mathbb D_\omega\to\mathbb R$ as the partial function mapping
    $\varphi\in\mathbb D_\omega$ to
    $\lim\varphi$ iff $\forall i,j:|\varphi_i-\varphi_{i+j}|\leq2^{-(i+1)}$.
    Any $\varphi\in\xi^{-1}(x)$ is called a \emph{$\xi$-name} of $x\in\mathbb R$.
\item
    The representation $\rho:(x_k)_k\mapsto x$ mapping names $(x_k)_k$ of $x\in\mathbb R$ to $x$ as per \Cref{def:computable} is called \emph{Cauchy representation}.
\end{itemize}
\end{definition}
Using a standard product construction we can easily generalise the notion of $\xi$-names to $\xi^n$-names of $\mathbb R^n$.
When clear from the context, we will drop $n$ and just write $\xi$ to denote the corresponding generalised representation $\mathbb D_\omega^n\to\mathbb R^n$.

Computable equivalence between two representations not only implies that there are continuous maps between them but also that names can computably be transformed~\cite{DBLP:series/txtcs/Weihrauch00}.
Since the Cauchy representation itself is continuous~\cite{DBLP:journals/tcs/BrattkaH02} we
derive continuity of $\xi$,
which is used below to show compactness of preimages $\xi^{-1}(X)$ of compact sets $X\subseteq\mathbb R$ under $\xi$.
\cade{%
    All proofs can be found in~\todo[inlinepar]{\cite{}}.
}{%
    All proofs can be found in the appendix.
}

\begin{lemma}\label{prop:cade}
The following properties hold for $\xi$.
\begin{enumerate}
\item $\xi$ is a representation of $\mathbb R^n$:
    it is well-defined and surjective.
\item\label{it:xi2rho-cade}
    Any $\xi$-name of $\vec x\in\mathbb R^n$ is a Cauchy-name of $\vec x$.
\item\label{it:xi-equiv-rho-cade}
    $\xi$ is computably equivalent to the Cauchy representation.
\item $\xi$ is continuous.
\end{enumerate}
\end{lemma}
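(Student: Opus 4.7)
The plan is to verify the four claims separately; items 1, 2, and 4 follow easily from the defining Cauchy-type inequality on $\xi$-names, while item 3 is the crux and requires some care with constants.

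For item 1, the condition $\Vert\varphi_i-\varphi_{i+j}\Vert\leq 2^{-(i+1)}$ makes any $\varphi\in\dom\xi$ a Cauchy sequence, so $\lim\varphi$ exists uniquely and $\xi$ is well-defined. For surjectivity, given $\vec x\in\mathbb R^n$ I would take $\varphi_k$ to be a nearest point of $\mathbb D_k^n$ to $\vec x$, yielding $\Vert\varphi_k-\vec x\Vert\leq 2^{-(k+2)}$ (half the grid spacing), from which the triangle inequality gives $\Vert\varphi_i-\varphi_{i+j}\Vert\leq 2^{-(i+1)}$. For item 2, letting $j\to\infty$ in the defining inequality yields $\Vert\varphi_i-\vec x\Vert\leq 2^{-(i+1)}\leq 2^{-i}$, which is exactly the Cauchy-name condition from \Cref{def:computable}.

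For item 3, the direction from $\xi$-names to Cauchy names is simply the identity, by item 2. The other direction is the main technical point. Given a Cauchy name $\psi$ of $\vec x$, I would define $\varphi_k$ component-wise as a nearest element of $\mathbb D_k$ to $\psi_{k+c}$ for some fixed constant $c$; this operation is computable on rationals (ties broken arbitrarily). The resulting bound is $\Vert\varphi_k-\vec x\Vert\leq 2^{-(k+2)}+2^{-(k+c)}$, so via the triangle inequality $\Vert\varphi_i-\varphi_{i+j}\Vert\leq(2^{-(i+2)}+2^{-(i+c)})+(2^{-(i+j+2)}+2^{-(i+j+c)})$. Choosing $c=4$ suffices: the worst case occurs at $j=1$ and gives $15\cdot 2^{-(i+5)}<16\cdot 2^{-(i+5)}=2^{-(i+1)}$. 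The hard part here is choosing $c$ large enough, since smaller values leave insufficient slack to absorb both the rounding error and the Cauchy-approximation error within the $2^{-(i+1)}$ budget.

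For item 4, continuity of $\xi$ follows directly from the Baire-space topology on $\mathbb D_\omega^n$: given $\varphi\in\dom\xi$ and $\epsilon>0$, pick $n$ with $2^{-n}<\epsilon$; any $\varphi'\in\dom\xi$ with $d(\varphi,\varphi')\leq 1/(n+1)$ must agree with $\varphi$ at index $n$, hence $\Vert\xi(\varphi)-\xi(\varphi')\Vert\leq\Vert\xi(\varphi)-\varphi_n\Vert+\Vert\varphi'_n-\xi(\varphi')\Vert\leq 2\cdot 2^{-(n+1)}=2^{-n}<\epsilon$. Alternatively, item 4 is immediate from item 3 together with the known continuity of the Cauchy representation~\cite{DBLP:journals/tcs/BrattkaH02}, since computable reductions between representations are in particular continuous.
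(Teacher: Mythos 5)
Your proof is correct and follows essentially the same route as the paper's: item 2 by passing to the limit in the defining inequality, item 3 by rounding a shifted Cauchy name onto the dyadic grid with the same shift $c=4$ and the same error budget $2^{-(i+2)}+2^{-(i+4)}+2^{-(i+j+2)}+2^{-(i+j+4)}\leq 2^{-(i+1)}$, and item 4 via continuity of the Cauchy representation. The only (harmless) nitpick is in your direct continuity argument, where $d(\varphi,\varphi')\leq 1/(n+1)$ only guarantees agreement up to index $n-1$; you want strict inequality (or $\delta=1/(n+2)$) to get agreement at index $n$.
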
%
The converse of \cref{it:xi2rho-cade} does not hold. An example for a Cauchy-name of $0\in\mathbb R$ is the sequence $(x_n)_n$ with $x_n=(-2)^{-n}$ for all $n\in\omega$, which does not satisfy $\forall i,j:|x_i-x_{i+j}|\leq2^{-(i+1)}$.
However, given a name of a real number, we can compute a corresponding $\xi$-name, this is one direction of the property in \cref{it:xi-equiv-rho-cade}.

As a consequence of \cref{it:xi2rho-cade} a function-oracle machine $M^?$ computing $f:\mathbb R^n\to\mathbb R$ according to \Cref{def:computable} can be run on $\xi$-names of $\vec x\in\mathbb R^n$ leading to valid Cauchy-names of $f(\vec x)$.
Note that this proposition does not require $M_f^?$ to compute a $\xi$-name of $f(\vec x)$.
Any rational sequence rapidly converging to $f(\vec x)$ is a valid output.
This means, that the model of computation remains unchanged with respect to the
earlier parts of this paper.
It is the set of names the machines are operated on, which is restricted.
This is reflected in \Cref{alg:2}
by computing dyadic rational approximations $\tilde{\vec x}_k$ of $\interp{\vec x}^\alpha$ such that
$\tilde{\vec x}_k\in\mathbb D_k^n$ instead of keeping the name of $\interp{\vec x}^\alpha$ constant
as has been done in \Cref{alg:box-linearisation'-cade}.

\begin{algorithm}[tp]
\begin{algorithmic}
\Function{LineariseLocal$_\delta$}{$f,\vec x,\diamond,\alpha$}
    \State $\varphi\gets(m\mapsto \Approx(\interp{\vec x}^\alpha,m))$
    \Comment then $\varphi$ is a $\xi$-name of $\interp{\vec x}^\alpha$
    \State compute $p\geq-\lfloor\log_2(\min\{1,\delta/4\})\rfloor$
    \State run $M_f^\varphi(p+2)$, record its output $\tilde y$ and its maximum query $k\in\omega$ to $\varphi$
    \If{$\tilde y\diamond-\delta/2$}
        \State \Return $\None$
    \Else
        \State \Return $(\vec x\notin B(\interp{\vec x}^\alpha,2^{-k}))$
    \EndIf
\EndFunction
\end{algorithmic}
\caption{\textbf{(Local linearisation)}
Algorithm $\delta$-deciding $P:f(\vec x)\diamond 0$ and -- in case \UNSAT{} -- computing a linearisation at $\alpha$ or returning ``\None'' and in this case $\alpha$ satisfies $P_\delta$.
The function $f$ is computed by machine $M_f^?$.
}
\label{alg:2}
\end{algorithm}

In particular, in \Cref{th:int:lin:alt-cade} we show that linearisations for the $(L_\delta)$ rule can be computed by \Cref{alg:2}, which -- in contrast to \Call{linearise$_\delta$}{} in \Cref{alg:box-linearisation'-cade} -- does not require access to a procedure computing an upper bound $\mu_f$ on the uniform modulus of continuity of the non-linear function $f\in\mathcal F_{\mathrm{nl}}$ valid on the entire bounded domain.
It not just runs the machine $M_f^?$, but also
observes the queries $M_f^\varphi$ poses to its oracle in order to
obtain a local modulus of continuity of $f$ at the point of evaluation.
The function $\Approx(\vec x,m)\coloneqq\round{\vec x\cdot 2^{m+1}}/2^{m+1}$ used to define \Cref{alg:2} computes a dyadic approximation of $\vec x$, with
$\round\cdot:\mathbb Q^n\to\mathbb Z^n$ denoting a rounding operation, that is, it satisfies 
$\forall \vec q:\Vert\round{\vec q}-\vec q\Vert\leq\frac12$. On rationals (our use-case), $\round\cdot$ is computable by a classical Turing machine.

\begin{definition}[{\cite[Definition 6.2.6]{DBLP:series/txtcs/Weihrauch00}}]\label{def:local-mod}
Let $f:\mathbb R^n\to\mathbb R$ and $\vec x\in\dom f$.
A function $\gamma:\mathbb N\to\mathbb N$ is called \emph{a (local) modulus of continuity of $f$ at $\vec x$} if for all $p\in\mathbb N$ and $\vec y\in\dom f$,
$\Vert\vec x-\vec y\Vert\leq 2^{-\gamma(p)}\implies\vert f(\vec x)-f(\vec y)\vert\leq2^{-p}$
holds.
\end{definition}
We note that in most cases a local modulus of continuity of $f$ at $\vec x$
is smaller than the best uniform modulus of $f$ on its domain, since it only depends on the local behaviour of $f$ around $x$.
One way of computing a local modulus of $f$ at $\vec x$ is using the function-oracle machine $M_f^?$ as defined next.
\begin{definition}\label{def:local-mod-M}
Let $M^?_f$ compute $f:\mathbb R^n\to\mathbb R$ and let $\vec x\in\dom f$ have Cauchy-name $\varphi$.
The function
$\gamma_{M_f^?,\varphi}:p\mapsto\max\{0,k:k~\text{is queried by}~M_f^\varphi(p+2)\}$
is called
\emph{the effective local modulus of continuity induced by $M_f^?$ at $\varphi$}.
\end{definition}
The effective local modulus of continuity of $f$ at a name $\varphi$ of $\vec x\in\dom f$ indeed is a local modulus of continuity of $f$ at $\vec x$~\cite[Theorem~2.13]{DBLP:books/daglib/0067010}.

We prove that \Cref{alg:2} indeed computes linearisations
in \Cref{prf:local-lin''}.

\begin{lemma}\label{lem:local-lin''}
Let $P:f(\vec x)\diamond0$ be a non-linear constraint in $\mathcal N$ and
$\alpha$ be an assignment of $\vec x$ to rationals in $\dom f$.
Whenever $C={}$\Call{LineariseLocal$_\delta$}{$f,\vec x,\diamond,\alpha$}
and $C\neq\None$, $C$ is an \epsfull linearisation of $P$ at $\alpha$, 
 with $\epsilon$
corresponding to the effective local modulus of continuity
induced by $M_f^?$ at a $\xi$-name of $\interp{\vec x}^\alpha$.
\end{lemma}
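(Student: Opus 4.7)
The plan is to unpack what \Cref{alg:2} produces and connect it step by step to the three properties required: (i) $\interp{C}^\alpha=\False$, (ii) $\forall\beta:\interp{P}^\beta=\True\Rightarrow\interp{C}^\beta=\True$, and (iii) \epsfull{}ness with $\eps=2^{-k}$ where $k$ is the maximum query recorded. First I would verify that the sequence $\varphi$ built via $\Approx$ is in fact a valid $\xi$-name of $\interp{\vec x}^\alpha$: by construction $\varphi_m\in\mathbb D_m^n$ and $\Vert\varphi_m-\interp{\vec x}^\alpha\Vert\leq 2^{-(m+1)}$, from which the Cauchy condition $\Vert\varphi_i-\varphi_{i+j}\Vert\leq 2^{-(i+1)}$ follows by the triangle inequality. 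By \Cref{prop:cade}(\ref{it:xi2rho-cade}), $\varphi$ is then also a Cauchy-name, so $M_f^\varphi(p+2)$ halts (using $\interp{\vec x}^\alpha\in D_P\subseteq\widebar{D_P}\subseteq\dom f$ from the bounded instance assumption) with output $\tilde y$ satisfying $|\tilde y-f(\interp{\vec x}^\alpha)|\leq 2^{-(p+2)}\leq \delta/16$, since $p\geq-\lfloor\log_2(\min\{1,\delta/4\})\rfloor$ gives $2^{-p}\leq\delta/4$.

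Next, assuming $C\neq\None$, the algorithm entered the else-branch, so $\neg(\tilde y\diamond -\delta/2)$ with $\diamond\in\{{>},{\geq}\}$; in both cases $\tilde y<-\delta/2$ (or $\leq$), and combining with the approximation bound yields
\[
  f(\interp{\vec x}^\alpha)\leq \tilde y + \tfrac{\delta}{16} < -\tfrac{\delta}{2}+\tfrac{\delta}{16} = -\tfrac{7\delta}{16}<0\text.
\]
In particular $\interp{P}^\alpha=\False$, and since $\interp{\vec x}^\alpha\in B(\interp{\vec x}^\alpha,2^{-k})$ the returned clause $C=(\vec x\notin B(\interp{\vec x}^\alpha,2^{-k}))$ satisfies $\interp{C}^\alpha=\False$, establishing property (i).

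For properties (ii) and (iii) I would invoke \Cref{def:local-mod-M}: the maximum query $k$ recorded during the run of $M_f^\varphi(p+2)$ is precisely $\gamma_{M_f^?,\varphi}(p)$, which by the cited \cite[Theorem~2.13]{DBLP:books/daglib/0067010} is a local modulus of continuity of $f$ at $\interp{\vec x}^\alpha=\xi(\varphi)$ in the sense of \Cref{def:local-mod}. Therefore, for every $\vec y\in\dom f$ with $\Vert\vec y-\interp{\vec x}^\alpha\Vert\leq 2^{-k}$ we have $|f(\vec y)-f(\interp{\vec x}^\alpha)|\leq 2^{-p}\leq\delta/4$, and combining with the preceding bound on $f(\interp{\vec x}^\alpha)$ gives
\[
  f(\vec y) < -\tfrac{7\delta}{16}+\tfrac{\delta}{4} = -\tfrac{3\delta}{16}<0\text,
\]
so $\neg(f(\vec y)\diamond 0)$. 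Consequently, for any $\beta$ with $\interp{\vec x}^\beta\in B(\interp{\vec x}^\alpha,2^{-k})$ we have $\interp{P}^\beta=\False$; this is exactly the \epsfull{}ness condition with $\eps=2^{-k}$, and the contrapositive is property (ii) (restricted to the relevant domain $\widebar{D_P}$, where $P$ is defined).

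The main obstacle I expect is bookkeeping around the model of computation: the effective local modulus $\gamma_{M_f^?,\varphi}$ is defined relative to a specific name $\varphi$, so I must argue that the $\xi$-name produced by $\Approx$ is a legal input to $M_f^?$ (handled by \Cref{prop:cade}(\ref{it:xi2rho-cade})) and that the ``max query'' recorded by the algorithm coincides with $\gamma_{M_f^?,\varphi}(p)$ as defined, modulo the harmless $\max\{0,\cdot\}$. Aside from this, the argument is a careful but routine chase of the constants $\delta/16,\delta/4,\delta/2$ to land strictly below $0$.
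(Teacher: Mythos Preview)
Your argument is correct and follows the same route as the paper's: verify $\varphi$ is a $\xi$-name, identify the recorded maximum query $k$ with $\gamma_{M_f^?,\varphi}(p)$, use the local modulus to bound $f$ on $B(\interp{\vec x}^\alpha,2^{-k})$, and chase the constants from $\neg(\tilde y\diamond-\delta/2)$ to $\neg(f(\vec z)\diamond 0)$. One terminological slip: what you derive for $\beta$ in the ball is $\interp P^\beta=\False$, which is precisely the linearisation property~(ii) by contraposition, whereas \epsfull{}ness (\Cref{def:eps-full}) asks for $\interp C^\beta=\False$ on the ball --- but that is immediate from the shape $C=(\vec x\notin B(\interp{\vec x}^\alpha,2^{-k}))$, as the paper also observes.
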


Thus, the function \textsc{lineariseLocal$_\delta$}{} in \Cref{alg:2}
is a drop-in replacement for \textsc{linearise$_\delta$}{} in \Cref{alg:box-linearisation'-cade} since the condition on returning a linearisation of $P$ versus accepting $P_\delta$ is identical.
The linearisations however differ in the radius $\epsilon$, which now, according to \Cref{lem:local-lin''}, corresponds to the effective local modulus of continuity.
The resulting procedure we call \Call{nlinStepLocal$_\delta$}{}.
One of its advantages over \Call{nlinStep$_\delta$}{} is
running $M_f^?$ on $\xi$-names instead of Cauchy-names, is that they form a compact set for bounded instances, unlike the latter.   This allows us
to bound $\eps>0$ for the computed \epsfull local linearisations of otherwise arbitrary $\delta$-\ksmt runs.
A proof of the following Lemma showing
compactness of preimages $\xi^{-1}(X)$ of compact sets $X\subseteq\mathbb R$ under $\xi$ is given in \Cref{prf:names-compact-cade}.
\begin{lemma}\label{prop:names-compact-cade}
    Let $X\subset\mathbb R^n$ be compact.
    Then the set $\xi^{-1}(X)\subset\mathbb D_\omega^n$ of
    $\xi$-names of elements in $X$ is compact as well.
\end{lemma}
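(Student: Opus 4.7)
The plan is to show compactness of $\xi^{-1}(X)$ by combining two observations: first, that it is a closed subset of $\mathbb D_\omega^n$ (since $\xi$ is continuous by \Cref{prop:cade}), and second, that it is contained in a product of finite sets, which is compact by Tychonoff. The core intuition is that the definition of $\xi$-name forces each component of $\varphi \in \xi^{-1}(X)$ to stay within $2^{-(k+1)}$ of a bounded point, so only finitely many dyadic values in $\mathbb D_k$ remain possible at each level $k$.

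Concretely, I would first fix $M>0$ such that $X \subseteq [-M,M]^n$, exploiting boundedness of the compact set $X$ (in max-norm). For any $\varphi \in \xi^{-1}(\vec x)$ with $\vec x \in X$, the Cauchy-type condition in \Cref{def:Domega-xi-Cauchy} yields $\Vert \varphi_k - \vec x\Vert \leq 2^{-(k+1)} \leq 1/2$ by letting $j \to \infty$ (using that $\xi(\varphi) = \lim \varphi$). Hence every component of $\varphi_k$ lies in the finite set $F_k \coloneqq \mathbb D_k \cap [-M-1/2,\, M+1/2]$, which has cardinality at most $2(M+1/2)\cdot 2^{k+1}+1$.

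Next, I would define $K \coloneqq \bigtimes_{k \in \omega} F_k^n$ and observe that $\xi^{-1}(X) \subseteq K$. Each $F_k^n$ is finite, hence compact in the discrete topology inherited from $\mathbb D_k^n$. By Tychonoff's theorem, the product $K$ is compact in the product topology, which coincides with the subspace topology induced from $\mathbb D_\omega^n$ (Baire space topology). Since $X$ is compact and thus closed in $\mathbb R^n$, and $\xi$ is continuous by \Cref{prop:cade}, the preimage $\xi^{-1}(X)$ is closed in $\mathbb D_\omega^n$, and in particular closed in $K$. A closed subset of a compact space is compact, which yields the conclusion.

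I do not expect a genuine obstacle here; the only subtle point is justifying that the Baire-space subspace topology on $K$ agrees with the product-of-discrete topology on $\bigtimes_k F_k^n$, which follows immediately from the definition of the metric $d$ (basic open sets fix finitely many initial coordinates in both topologies). The argument would generalise the one-dimensional case to $\mathbb R^n$ componentwise, using the max-norm bounds and that $\mathbb D_\omega^n$ carries the product topology.
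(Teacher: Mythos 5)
Your overall strategy coincides with the paper's: bound each coordinate $\varphi_k$ near the compact set $X$ so that it ranges over a finite subset of $\mathbb D_k^n$, invoke Tychonoff to get a compact product containing $\xi^{-1}(X)$, and conclude via ``closed subset of a compact space is compact.'' The finiteness and Tychonoff parts of your argument are fine (your bound $\Vert\varphi_k-\vec x\Vert\leq2^{-(k+1)}$, obtained by letting $j\to\infty$ in the defining condition of $\dom\xi$, is even slightly sharper than the paper's \Cref{lem:dist-cade}), and your remark about the subspace topology on $K$ agreeing with the product of discrete topologies is correct.

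The gap is in the closedness step. You write that since $X$ is closed and $\xi$ is continuous, ``the preimage $\xi^{-1}(X)$ is closed in $\mathbb D_\omega^n$.'' But $\xi$ is a \emph{partial} map, and continuity of a partial map only gives that $\xi^{-1}(X)$ is closed in $\dom\xi$ with the subspace topology, i.e.\ $\xi^{-1}(X)=\dom\xi\cap C$ for some closed $C\subseteq\mathbb D_\omega^n$; it does not by itself give closedness in the ambient space. (Compare the identity on $(0,1)\subseteq\mathbb R$: the preimage of the closed set $[0,1]$ is $(0,1)$, not closed in $\mathbb R$.) To close the gap you must additionally show that $\dom\xi$ is closed in $\mathbb D_\omega^n$ --- which is true, since $\dom\xi=\bigcap_{i,j}\{\varphi:\Vert\varphi_i-\varphi_{i+j}\Vert\leq2^{-(i+1)}\}$ and each of these constraints involves only finitely many coordinates ranging over discrete sets, hence is clopen. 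The paper avoids relying on continuity altogether: its \Cref{prop:names-closed-cade} exhibits the complement of $\xi^{-1}(X)$ directly as a union of two open sets, one of which is exactly $\mathbb D_\omega\setminus\dom\xi$, so the partiality issue is handled explicitly there. With the added observation that $\dom\xi$ is closed, your shorter route through continuity is valid.
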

The proof involves showing $\xi^{-1}(X)$ to be closed and uses the fact that for each component $\varphi_k$ of names $(\varphi_k)_k$ of $\vec x\in X$ there are just finitely many choices from $\mathbb D_k$ due to the restriction of the length of the dyadics. This is not the case for the Cauchy representation used in \Cref{def:computable} and
\todorev[2]{`sceptical'. \emph{`We will [...] add clarifications'}}%
it is the key for deriving existence of a strictly positive lower bound $\epsilon$ on the \epsfull{}ness of linearisations.

\begin{theorem}\label{th:int:lin:alt-cade}
Let $\delta\in\mathbb Q_{>0}$.
For any bounded instance $\mathcal L_0\land\mathcal N$
there is $\epsilon>0$ such that any $\delta$-\ksmt run starting in $(\nil,\mathcal L_0,\mathcal N)$, where applications of $(L)$ and $\Fsat[\delta]$ are performed according to \Call{nlinStepLocal$_\delta$}{}, is \epsfull.
\end{theorem}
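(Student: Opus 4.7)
The plan is to extract a uniform positive bound $\epsilon$ on the radii $2^{-k}$ computed by \textsc{LineariseLocal$_\delta$} across all non-linear constraints $P$ in $\mathcal N$ and all assignments $\alpha$ that can arise during a run, then invoke \Cref{lem:local-lin''}. Concretely, for each non-linear constraint $P: f(\vec x) \diamond 0$ in $\mathcal N$ I would derive a bound $k_P$ on the maximum oracle position queried by $M_f^\varphi(p+2)$, uniformly over all $\xi$-names $\varphi$ of points in $\widebar{D_P}$, and set $\epsilon := \min_{P \in \mathcal N} 2^{-k_P}$. Since $\mathcal N$ is finite, positivity of $\epsilon$ reduces to positivity of each $\epsilon_P := 2^{-k_P}$.

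The key step is a compactness-plus-continuity argument on the oracle space. First, by the bounded-instance hypothesis $\widebar{D_P}$ is compact, so by \Cref{prop:names-compact-cade} the preimage $K_P := \xi^{-1}(\widebar{D_P}) \subseteq \mathbb D_\omega^n$ is compact. Since $\widebar{D_P} \subseteq \dom f$, by \Cref{prop:cade}\,(\ref{it:xi2rho-cade}) every $\varphi \in K_P$ is a Cauchy-name of a point in $\dom f$, hence $M_f^\varphi(p+2)$ terminates and queries only finitely many positions; denote the maximum by $k(\varphi)$. Crucially, $M_f^\varphi(p+2)$'s behaviour depends solely on the prefix $\varphi_0,\ldots,\varphi_{k(\varphi)}$, so the basic cylinder $U_\varphi \subseteq \mathbb D_\omega^n$ fixing these components is an open neighbourhood of $\varphi$ on which $k(\cdot)$ is constant and equal to $k(\varphi)$. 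The family $\{U_\varphi : \varphi \in K_P\}$ is thus an open cover of $K_P$, and by compactness it admits a finite subcover; the maximum of $k$ over the finitely many centres of this subcover yields the desired bound $k_P$.

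Having $k_P$ in hand, I would conclude as follows: during any $\delta$-\ksmt run, a linearisation of $P$ via \textsc{LineariseLocal$_\delta$} is only attempted at states $(\alpha,\mathcal L,\mathcal N)$ with $\interp{\mathcal L}^\alpha \neq \False$, whence $\interp{\vec x}^\alpha \in D_P \subseteq \widebar{D_P}$. The specific $\xi$-name $\varphi = (m \mapsto \Approx(\interp{\vec x}^\alpha, m))$ constructed by the algorithm therefore lies in $K_P$, so its recorded maximum query $k$ satisfies $k \leq k_P$, and the excluded ball has radius $2^{-k} \geq 2^{-k_P} = \epsilon_P$. By \Cref{lem:local-lin''} each such linearisation is $\epsilon_P$-full, hence $\epsilon$-full. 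Since every linearisation in the run arises this way, the entire run is $\epsilon$-full.

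The main obstacle is the compactness step: this is exactly the place where the switch from the Cauchy representation $\rho$ to the restricted representation $\xi$ pays off, because the analogous preimage under $\rho$ is not compact (the single components range over the infinite set $\mathbb Q$ rather than a finite slice of $\mathbb D_k$), and without compactness the $k(\varphi)$ values obtained from the locally constant covering could be unbounded. I expect the bookkeeping to ensure that the algorithm's $\Approx$-based name of $\interp{\vec x}^\alpha$ actually lands in $K_P$ (rather than in some stray subset of $\mathbb D_\omega^n$) to be the most delicate routine part, but it follows directly from $D_P \subseteq \widebar{D_P}$ and the definition of $\Approx$.
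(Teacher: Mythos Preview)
Your proposal is correct and follows essentially the same route as the paper: for each $P\in\mathcal N$ use \Cref{prop:names-compact-cade} to get compactness of $\xi^{-1}(\widebar{D_P})$, bound the maximum oracle query of $M_f^\varphi(p+2)$ uniformly over that compact set, take the minimum over the finitely many $P$, and finish with \Cref{lem:local-lin''}. The only cosmetic difference is that the paper asserts continuity of $\varphi\mapsto k_\varphi$ and invokes the extreme-value theorem to obtain $\max_\varphi k_\varphi$, whereas you spell this out via local constancy on prefix-cylinders and a finite subcover; your version is in fact the more explicit justification of the paper's continuity claim.
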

\begin{proof}
Assume $\mathcal L_0\land\mathcal N$ is a bounded instance.
Set $\eps\coloneqq\min\{\eps_P:P\in\mathcal N\}$, where $\eps_P$ is defined as follows.
Let $P:f(\vec x)\diamond 0$ in $\mathcal N$.
Then the closure $\widebar{D_P}$ of the
bounded set $D_P$ is compact.
Let $E$ be the set of $\xi$-names of elements of $\widebar{D_P}\subseteq\dom f$ (see \Cref{def:Domega-xi-Cauchy})
and for any $\varphi\in E$
let $k_\varphi$ be the maximum index queried by $M_f^\varphi(p)$ where $p$ is computed from $\delta$ as in \Cref{alg:2}.
Therefore $\varphi\mapsto k_\varphi$ is continuous.
By \Cref{prop:names-compact-cade} $E$ is compact, thus,
there
\todofb{could also state $k_\psi=\max\{k_\varphi:\varphi\in E\}$ instead; which is clearer? \fb{keep} \mk{keep}}%
is $\psi\in E$ such that $2^{-k_\psi}=
\inf\{2^{-k_\varphi}:\varphi\in E\}$.
Set $\eps_P\coloneqq2^{-k_\psi}$.
The claim then follows by \Cref{lem:local-lin''}.
\end{proof}

Thus we can conclude.
\begin{corollary}
 $\delta$-\ksmt with local linearisations is a $\delta$-complete decision procedure.
\end{corollary}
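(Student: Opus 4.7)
The plan is to assemble the corollary from three ingredients already established: soundness of the $\delta$-\ksmt calculus, validity of the linearisations produced by \Call{LineariseLocal$_\delta$}{}, and termination of $\epsilon$-full runs on bounded instances. The strategy mirrors the proof of \Cref{thm:delta-ksmt-complete-cade} for the full-box variant, but uses \Cref{th:int:lin:alt-cade} in place of \Cref{th:int:lin-epsfull-cade}.

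First I would observe that \Cref{thm:delta_sound-cade} establishes soundness of $\delta$-\ksmt independently of how the linearisation rule $(L)$ and the terminal check $\Fsat[\delta]$ are realised; the proof only uses that every applied linearisation satisfies \Cref{def:linearisation}. Hence, to inherit soundness for the local variant, I only need to confirm that each step taken by \Call{nlinStepLocal$_\delta$}{} is a legitimate $\delta$-\ksmt transition. This is exactly the content of \Cref{lem:local-lin''}: whenever \Call{LineariseLocal$_\delta$}{} returns a clause $C\ne\None$, it is a genuine linearisation of the non-linear constraint at the current assignment, so rule $(L)$ is applicable; and whenever it returns $\None$ for every non-linear constraint, the computed dyadic approximations certify $\interp{\mathcal N_\delta}^\alpha=\True$, so $\Fsat[\delta]$ is applicable.

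Next I would invoke \Cref{th:int:lin:alt-cade}: for any bounded instance there exists a single $\epsilon>0$ such that every $\delta$-\ksmt run driven by \Call{nlinStepLocal$_\delta$}{} is $\epsilon$-full, with $\epsilon$ obtained as the minimum of the quantities $\eps_P$ extracted from compactness of the $\xi$-preimages of the closures $\widebar{D_P}$. Feeding this into \Cref{th:bounded} yields termination of every such run on a bounded instance.

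Combining the two, every run of \Call{nlinStepLocal$_\delta$}{} terminates in one of the final states $\UNSAT$, $\delta$-\SAT, or $\SAT$, and by soundness the answer correctly $\delta$-decides the input formula in the sense of \Cref{def:pred-approx}. Therefore $\delta$-\ksmt with local linearisations is a $\delta$-complete decision procedure. No step is genuinely difficult here: the real work has been done in \Cref{lem:local-lin'',prop:names-compact-cade,th:int:lin:alt-cade}, and the corollary is essentially a bookkeeping argument stitching soundness and termination together.
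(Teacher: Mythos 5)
Your proposal is correct and follows exactly the route the paper intends: the corollary is stated without an explicit proof ("Thus we can conclude"), and the intended argument is precisely your combination of soundness (\Cref{thm:delta_sound-cade}), the legitimacy of the local steps (\Cref{lem:local-lin''} together with the $\None$ case handled as in \Cref{th:int:lin-correct-cade}), \epsfull{}ness (\Cref{th:int:lin:alt-cade}), and termination (\Cref{th:bounded}), mirroring the proof of \Cref{thm:delta-ksmt-complete-cade}. Your write-up in fact makes explicit a detail the paper glosses over, namely why returning $\None$ for all constraints justifies applying $\Fsat[\delta]$.
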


\section{Conclusion}\label{sec:concl}
In this paper we extended the the \ksmt calculus to the $\delta$-satisfiability setting and proved that  
the resulting $\delta$-\ksmt calculus is a $\delta$-complete decision procedure for solving non-linear constraints over computable functions which include polynomials, exponentials, logarithms, trigonometric and
many other functions used in applications. 
We presented algorithms for constructing $\epsilon$-full linearisations
ensuring termination of $\delta$-\ksmt.
Based on methods from computable analysis we presented an algorithm for constructing local linearisations. Local linearisations exclude larger regions from the search space and can be used to avoid computationally expensive global analysis of non-linear functions. 

\bibliographystyle{plain}
\bibliography{references}

\cade{}{
\appendix

\section{Proofs}
\subsection{Proof of \Cref{prop:cade}}

    \subsubsection{Any $\xi$-name of $\vec x\in\mathbb R^n$ is a Cauchy-name of $\vec x$}
    \begin{lemma}\label{lem:dist-cade}
    For any $\vec x\in\mathbb R^n$ and $\xi$-name $\varphi$ of $\vec x$, $\forall k:|\vec x-\varphi_k|\leq 2^{-k}$ holds.
    \end{lemma}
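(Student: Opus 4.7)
The plan is to fix $k\in\omega$ and apply the defining Cauchy-style condition for $\xi$-names directly, then pass to the limit. By \Cref{def:Domega-xi-Cauchy}, $\varphi$ being a $\xi$-name of $\vec x$ means that $\vec x=\lim\varphi$ and that for every $i,j\in\omega$, $\Vert\varphi_i-\varphi_{i+j}\Vert\leq 2^{-(i+1)}$. (In the $n$-dimensional case, this is inherited componentwise from the product construction, so the max-norm bound carries through.)

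With $i=k$ fixed and $j$ ranging over $\omega$, the inequality $\Vert\varphi_k-\varphi_{k+j}\Vert\leq 2^{-(k+1)}$ holds uniformly in $j$. Taking $j\to\infty$ and using continuity of the norm together with $\vec x=\lim\varphi$, one obtains $\Vert\vec x-\varphi_k\Vert\leq 2^{-(k+1)}$. Since $2^{-(k+1)}\leq 2^{-k}$, this immediately yields the claim $\Vert\vec x-\varphi_k\Vert\leq 2^{-k}$, which is exactly the Cauchy-name condition from \Cref{def:computable}.

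There is no substantial obstacle here; the only things to be careful about are (i) that the definition uses the asymmetric bound $2^{-(i+1)}$ keyed to the smaller index, so one must instantiate $i=k$ rather than $i=k+j$, and (ii) that the limit argument is legitimate because the closed ball of radius $2^{-(k+1)}$ around $\varphi_k$ is closed in $\mathbb R^n$ and contains every tail element $\varphi_{k+j}$, so it contains their limit $\vec x$. This lemma then feeds directly into item~\ref{it:xi2rho-cade} of \Cref{prop:cade}, since the definition of a Cauchy-name in \Cref{def:computable} asks for exactly the bound $\Vert\vec x-\varphi_k\Vert\leq 2^{-k}$.
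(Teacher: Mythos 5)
Your proof is correct and takes essentially the same route as the paper: both instantiate the defining bound $|\varphi_i-\varphi_{i+j}|\leq2^{-(i+1)}$ at $i=k$ and pass to the limit $\lim\varphi=\vec x$ (the paper via a triangle inequality through some $\varphi_{n_0}$, you via closedness of the ball $\widebar{B}(\varphi_k,2^{-(k+1)})$). Your version even yields the marginally sharper bound $2^{-(k+1)}$, which of course implies the stated $2^{-k}$.
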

    \begin{proof}
    For simplicity we assume a dimension of $n=1$.
    The general case can be proved similarly.
    Let $x\in\mathbb R$ and $\varphi$ be a $\xi$-name of $x$
    and let $k\in\omega$.
    By construction $x=\lim\varphi$, hence there is
    $n_0\in\omega$ such that for every $n\geq n_0$ the bound
    $|\varphi_n-x|<2^{-k-1}$ holds.
    If $n_0\leq k$, the previous bound already gives the required property.
    Otherwise $n_0>k$, then
    $|\varphi_k-x|\leq|\varphi_k-\varphi_{n_0}|+|\varphi_{n_0}-x|$
    holds. Since $\varphi\in\dom\xi$, the first summand is bounded by $2^{-\min(k,n_0)-1}=2^{-(k+1)}$.
    By the property above, so is the second.
    Ergo $|\varphi_k-x|\leq2^{-k}$.
    \end{proof}
    The property that $\xi$-names are Cauchy-names follows directly from \Cref{lem:dist-cade}.
    
    \subsubsection{$\xi$ is computably equivalent to the Cauchy representation}
    \begin{proof}
    For simplicity we assume a dimension of $1$. The general case can be proved similarly.
    \begin{itemize}
    \item[$\Rightarrow$)]
        Let $\psi$ be a $\xi$-name of $x\in\mathbb R$.
        By \Cref{it:xi2rho-cade} of \Cref{prop:cade}, $\psi$
        is a
        name of $x$.
    \item[$\Leftarrow$)]
        Given $\varphi\in\dom\rho$ and $n\in\omega$.
        Compute $\psi_n\coloneqq\round{\varphi_{n+4}\cdot2^{n+1}}/2^{n+1}\in\mathbb D_n$
        where $\round\cdot:\mathbb Q\to\mathbb Z$ is a computable rounding operation
        with $|\round q-q|\leq1/2$.
    
        Then with $x\coloneqq\lim\varphi$:
        \begin{align*}
    	|\psi_n-x|&=|\round{\varphi_{n+4}\cdot2^{n+1}}/2^{n+1}-x| \\
    	&\leq|\round{\varphi_{n+4}\cdot2^{n+1}}-\varphi_{n+4}\cdot2^{n+1}|/2^{n+1}+
    	     |\varphi_{n+4}-x| \\
    	&\leq2^{-(n+2)}+2^{-(n+4)}
        \end{align*}
    
        We show $\psi\coloneqq(\psi_n)_n$ is a $\xi$-name of $x$.
        Let $n,k\in\omega$ with $k>0$.
        \begin{align*}
    	|\psi_n-\psi_{n+k}|
    	&\leq|\psi_n-x|+|\psi_{n+k}-x| \\
    	&\leq2^{-(n+2)}+2^{-(n+4)}+2^{-(n+k+2)}+2^{-(n+k+4)} \\
    	&\leq2^{-(n+2)}+2^{-(n+4)}+2^{-(n+3)}+2^{-(n+5)} \\
    	&\leq2^{-(n+1)}
        \end{align*}
        Thus, $\psi\in\dom\xi$ and
        therefore $\psi$ is a $\xi$-name of $\lim\psi=x$.
    \end{itemize}
    \end{proof}
    
    \subsubsection{$\xi$ is continuous}
    \begin{proof}
    Computable equivalence between two representations implies there are continuous maps between them.
    Since the Cauchy representation is continuous itself~\cite{DBLP:journals/tcs/BrattkaH02}, so is $\xi$.
    \end{proof}

\subsection{Proof of \Cref{lem:local-lin''}}\label{prf:local-lin''}

\paragraph{\upshape\textbf{\Cref{lem:local-lin''}.}} \emph{%
Let $P:f(\vec x)\diamond0$ be a non-linear constraint in $\mathcal N$ and
$\alpha$ be an assignment of $\vec x$ to rationals in $\dom f$.
Whenever $C={}$\Call{LineariseLocal$_\delta$}{$f,\vec x,\diamond,\alpha$}
and $C\neq\None$, $C$ is an \epsfull linearisation of $P$ at $\alpha$, 
 with $\epsilon$
corresponding to the effective local modulus of continuity
induced by $M_f^?$ at a $\xi$-name of $\interp{\vec x}^\alpha$.
}

\begin{proof}
Let $P:f(\vec x)\diamond 0$, $\alpha$ and $C\neq\None$ be as in the premise and
let $p,\tilde y$ and $\varphi$ as in \Cref{alg:2}.
Since $C\neq\None$ by construction
$C=(\vec x\notin B(\interp{\vec x}^\alpha,2^{-k}))$ where
$k=\gamma_{M^?_f,\varphi}(p)$ is the maximum query $M_f^\varphi(p+2)$ poses to its oracle.
Thus, by definition, $C$ is \epsfull with $\epsilon=2^{-k}$.
In order to show that $C$ indeed is a linearisation of $P$ at $\alpha$,
let $\vec z\in B(\interp{\vec x}^\alpha,2^{-k})\cap\widebar{D_P}\subseteq\dom f$.
As $\gamma_{M^?_f,\varphi}$ is a local modulus of continuity of $f$ at $\interp{\vec x}^\alpha$~\cite[Theorem~2.13]{DBLP:books/daglib/0067010},
$f(\vec z)$ is within distance $2^{-p}\leq\delta/4$ of $f(\interp{\vec x}^\alpha)$, which, by definition of $M_f^?$, is at most $2^{-(p+2)}<\delta/4$ away from $\tilde y$.
By construction of $C$ in \Cref{alg:2}
the property $\neg(\tilde y\diamond-\delta/2)$ holds.
As in case~\ref{th:int:lin:unsat} in the proof of \Cref{th:int:lin-correct-cade},
this property implies
$\neg(f(\vec z)\diamond 0)$.
Therefore, according to \Cref{def:eps-full}, $C$ is an \epsfull linearisation of $P$ at $\alpha$.
\end{proof}

\subsection{Proof of \Cref{prop:names-compact-cade}}\label{prf:names-compact-cade}

The proof of the following lemma
follows that of~\cite[Theorem 7.2.5.2]{DBLP:series/txtcs/Weihrauch00}.

\begin{lemma}\label{prop:names-closed-cade}
Let $X\subset\mathbb R^n$ be closed.
Then the set $\xi^{-1}(X)\subset\mathbb D_\omega^n$ of
$\xi$-names of elements in $X$ is closed as well.
\end{lemma}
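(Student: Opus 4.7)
My plan is to establish closedness by a direct sequential argument, exploiting the fact that $\mathbb D_\omega^n$ is metrised (by the natural product extension of the Baire metric $d$). I will take an arbitrary sequence $(\varphi^{(i)})_{i\in\mathbb N}\subseteq\xi^{-1}(X)$ converging to some $\varphi\in\mathbb D_\omega^n$ and show that $\varphi\in\xi^{-1}(X)$. This requires two things: that $\varphi\in\dom\xi$, and that $\xi(\varphi)\in X$.

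The crucial observation concerns the topology on $\mathbb D_\omega^n$: by the definition of $d$ in \Cref{def:Domega-xi-Cauchy}, $d(\varphi^{(i)},\varphi)\to 0$ is equivalent to the statement that for every $k\in\omega$ there is an index $I_k$ with $\varphi^{(i)}_n=\varphi_n$ for all $n\leq k$ and all $i\geq I_k$. In other words, componentwise convergence in $\mathbb D_\omega^n$ is eventual componentwise equality. I will then verify the Cauchy-name condition for $\varphi$, namely $\forall i,j:\Vert\varphi_i-\varphi_{i+j}\Vert\leq 2^{-(i+1)}$: for fixed $i,j$ pick $\ell$ large enough that $\varphi^{(\ell)}_i=\varphi_i$ and $\varphi^{(\ell)}_{i+j}=\varphi_{i+j}$; since $\varphi^{(\ell)}\in\dom\xi$, the bound holds for $\varphi^{(\ell)}$, hence also for $\varphi$. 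Therefore $\varphi\in\dom\xi$.

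It remains to show $\xi(\varphi)\in X$. Since $X$ is closed, it suffices to establish $\xi(\varphi^{(i)})\to\xi(\varphi)$ in $\mathbb R^n$. Here I will invoke \Cref{lem:dist-cade}, which gives $\Vert\xi(\psi)-\psi_k\Vert\leq 2^{-k}$ for every $\psi\in\dom\xi$ and every $k$. Fixing $k$ and using eventual equality $\varphi^{(i)}_k=\varphi_k$ for $i\geq I_k$, the triangle inequality yields
\[
    \Vert\xi(\varphi^{(i)})-\xi(\varphi)\Vert
    \leq\Vert\xi(\varphi^{(i)})-\varphi^{(i)}_k\Vert+\Vert\varphi^{(i)}_k-\varphi_k\Vert+\Vert\varphi_k-\xi(\varphi)\Vert
    \leq 2^{-k}+0+2^{-k}=2^{-k+1}
\]
for all $i\geq I_k$. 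As $k$ is arbitrary, $\xi(\varphi^{(i)})\to\xi(\varphi)$; by $\xi(\varphi^{(i)})\in X$ and closedness of $X$ we conclude $\xi(\varphi)\in X$, so $\varphi\in\xi^{-1}(X)$.

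The only real subtlety is the initial reformulation of convergence in the Baire metric as eventual componentwise equality — once this is made precise, the Cauchy-name condition transfers trivially to the limit because it is a pointwise finitary condition, and the convergence of $\xi$ values then reduces to a routine triangle-inequality estimate using \Cref{lem:dist-cade}. Since $\xi$ is a partial function (not everywhere defined on $\mathbb D_\omega^n$), continuity of $\xi$ alone does not immediately give closedness of $\xi^{-1}(X)$ in $\mathbb D_\omega^n$, which is why this slightly more hands-on argument is needed.
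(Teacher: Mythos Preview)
Your proof is correct, but it takes a genuinely different route from the paper's. The paper argues topologically by exhibiting the complement of $\xi^{-1}(X)$ as an explicit union of basic open sets: it splits the complement into $U_2=\mathbb D_\omega^n\setminus\dom\xi$ (shown open because the failure of the Cauchy condition is witnessed by a finite prefix) and $U_1$, the union of all basic balls $\mathbb D_\omega[u]$ whose $\xi$-image lands inside some open set of $\mathbb R^n\setminus X$. You instead exploit metrisability of $\mathbb D_\omega^n$ and argue sequentially: convergence in the Baire metric is eventual agreement on prefixes, so the Cauchy condition defining $\dom\xi$ passes to the limit (giving $\varphi\in\dom\xi$), and then \Cref{lem:dist-cade} plus the triangle inequality yield $\xi(\varphi^{(i)})\to\xi(\varphi)$. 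Your approach is arguably more elementary and makes transparent why partiality of $\xi$ is not an obstacle (you verify closedness of $\dom\xi$ along the way), whereas the paper's construction stays closer to the style of Weihrauch's textbook and does not rely on metrisability or on \Cref{lem:dist-cade}.
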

\begin{proof}
Again, for sake of simplicity we assume $n=1$ while the
general case be proved in a similar manner.

We first introduce the notation $\mathbb D_*$ for the
set of finite prefixes of elements in $\mathbb D_\omega$
and for any prefix $u\in\mathbb D_*$ let
$\mathbb D_\omega[u]\coloneqq\{u\varphi\mid \varphi\in\omega^\omega,u\varphi\in\mathbb D_\omega\}$ denote `the
ball' around $u$ in $\mathbb D_\omega$.

Observe that $\mathbb D_\omega[u]$ is a basic open set
in $\mathbb D_\omega$ for any
prefix $u\in\mathbb D_*$.

In order to show $\xi^{-1}(X)$ is closed, we prove that
its complement is open.
Since $X$ is closed there is a
\KK{$\mathcal B$ not needed \fb y}%
collection $\mathcal B$ of open
subsets of $\mathbb R$ such that $\bigcup\mathcal B=\mathbb R\setminus X$. Then
\[ C\coloneqq\{u\in\omega^*\mid\exists B\in\mathcal B~\text{s.t.}~
   \xi(\mathbb D_\omega[u])\subseteq B\}
\]
is a subset of $\omega^*$. Define
\begin{align*}
    U_1&\coloneqq\bigcup_{u\in C}\mathbb D_\omega[u] \\
    U_2&\coloneqq\{(p_n)_n\in\mathbb D_\omega\mid
        \exists i,j~\text{s.t.}~|p_i-p_{i+j}|>2^{-(i+1)}\}
\end{align*}
and $U\coloneqq U_1\cup U_2$.
By definition, $U_1$ is open in $\mathbb D_\omega$.
To see that $U_2$ is open in $\mathbb D_\omega$ as well,
observe that
\[ U_2=\mathbb D_\omega\setminus\dom\xi=\bigcup_{w\in C'}\mathbb D_\omega[w] \]
where $w\in C'\subseteq\mathbb D_*$ iff there are $i,j\in\omega$ such that $|w_i-w_{i+j}|>2^{-(i+1)}$.

We show $\xi^{-1}(X)=\mathbb D_\omega\setminus U$.
\begin{itemize}
\item Assume $p\in\xi^{-1}(X)$, that is, $\xi(p)\in X$.
	\begin{itemize}
	\item 
		Then $p=(p_k)_k$ with $p_k\cdot2^{k+1}\in\mathbb Z$ for every $k$
		such that $\forall i,j:|p_i-p_{i+j}|\leq2^{-(i+1)}$, thus, $p\notin U_2$.
	\item 
		Suppose $p\in U_1$.
		Then there is $u\in C$ and $B\in\mathcal B$ such that
		$p\in\mathbb D_\omega[u]$ and $\xi(\mathbb D_\omega[u])\subseteq B$ or
		$\mathbb D_\omega[u]\cap\dom\xi=\varnothing$.
		In both cases $p\notin\xi^{-1}(X)$, a contradiction.
	\end{itemize}

	From $p\notin U_1$ and $p\notin U_2$ we obtain $p\notin U$.

\item Assume $p\notin\xi^{-1}(X)$.

	\begin{itemize}
	\item First, consider $p\notin\dom\xi$. Then $p\in U_2$.

	\item Finally, consider $p\in\dom\xi$.
	    Since $\xi(p)\notin X$, there are
	    \KK{elaborate: whole prefix will be outside \fb y}%
		some prefix $u$ of $p$ and
		some $B\in\mathcal B$ such that $\xi(\mathbb D_\omega[u])\subseteq B$,
		and so $p\in U_1$.
	\end{itemize}
	From $p\in U_1$ or $p\in U_2$ we obtain $p\in U$.
\end{itemize}
Therefore, $U$ is the open complement of $\xi^{-1}(X)$, which is a closed subset of $\mathbb D_\omega$.
\end{proof}

Now, the proof of \Cref{prop:names-compact-cade} follows from Tychonoff's theorem,
which states that arbitrary products of non-empty compact spaces again are compact~\cite{Tychonoff1930}.

\paragraph{\upshape\textbf{\Cref{prop:names-compact-cade}.}} \emph{%
    Let $X\subset\mathbb R^n$ be compact.
    Then the set $\xi^{-1}(X)\subset\mathbb D_\omega^n$ of
    $\xi$-names of elements in $X$ is compact as well.
}
\begin{proof}
By \Cref{lem:dist-cade},
$\xi^{-1}(X)$ is a subset of the product of the
finite and therefore compact spaces
\[ \{\vec y\in\mathbb D_k^n:\Vert\vec x-\vec y\Vert\leq2^{-k},\vec x\in X\} \] over $k\in\omega$.
As a closed (\Cref{prop:names-closed-cade}) subset of a compact space, $\xi^{-1}(X)$ is compact as well.
\end{proof}
}

\end{document}